\newtheorem{theorem}{Theorem}
\newtheorem{proposition}{Proposition}
\newtheorem{lemma}{Lemma}
\newtheorem{corollary}{Corollary}
\theoremstyle{definition}
\newtheorem{claim}{Claim}
\newtheorem{example}{Example}
\newtheorem{remark}{Remark}
\newtheorem*{definition}{Definition}  
\newtheorem{fact}{Fact}
\begin{document}

\title{Justified Fairness in House Allocation Problems: two Characterizations of Strategy-proof Mechanisms\thanks{This article is based upon work supported by the National Science Foundation under Grant No.~DMS-1928930 and by the Alfred P.~Sloan Foundation under grant G-2021-16778, while authors were in residence at the Simons Laufer Mathematical Sciences Institute (formerly MSRI) in Berkeley, California, during the Fall 2023 semester.} }

\author{Jacob Coreno\thanks{University of Melbourne. Contact:~jacob.coreno@unimelb.edu.au} \and Di Feng\thanks{Dongbei University of Finance and Economics. Contact:~~dfeng@dufe.edu.cn}}

\date{\today}

\maketitle

\begin{abstract}
\noindent 
We consider the house allocation problems with strict preferences, where monetary transfers are not allowed.
We propose two properties in the spirit of justified fairness. Interestingly, together with other well-studied properties (\textsl{strategy-proofness} and \textsl{non-bossiness}), our two new properties identify serial dictatorships and sequential dictatorships, respectively.
\medskip

\noindent \textbf{JEL codes:} C78; D61; D47.\medskip

\noindent {\it Keywords: Strategy-proofness; Justified Fairness; Market Design }
\end{abstract}

\pagebreak

\section{Introduction}
We investigate house allocation problems where a finite set of heterogeneous and indivisible objects, such as (public) houses, are assigned to agents in a centralized manner, without monetary transfers. Agents have strict preferences over these objects, and a social planner selects a mechanism to allocate them. The challenge is to design a mechanism that encourages truthfulness from agents while ensuring efficiency and fairness.

Priority rules, such as serial and sequential dictatorships, are commonly used in practice.
For example, public housing is often allocated based on  ``first-come, first-served'' queuing systems.
But why are they so common in the real world?

From the mechanism and market design perspective, dictatorships are outstanding as they capture efficiency \citep{pycia2023}, maximize social welfare \citep{che2024}, and they are ``obviously strategy-proof'' \citep{li2017obviously,pycia2023theory}. Moreover, due to their simple descriptions, dictatorships are easy to implement \citep{pycia2022}.

We are interested in dictatorships for another aspect: fairness. 
While randomized dictatorships have been studied for fairness \citep{bogomolnaia2001}, deterministic variants have received less attention. 
To the best of our knowledge, the only related paper is \cite{svensson1994}, which shows that serial dictatorships satisfy a form of justice called \textsl{weak fairness}.
In this paper, we introduce two weaker versions of \citeauthor{svensson1994}'s (\citeyear{svensson1994}) \textsl{weak fairness} and demonstrate that, with the addition of well-known incentive properties, namely \textsl{strategy-proofness} and \textsl{group strategy-proofness}, our proposed fairness properties characterize the serial and sequential dictatorships, respectively.

\subsection{Overview of the paper}
In mechanism design and market design, fairness is one of the most important concerns, both in theory and in practice. However, unlike efficiency, defining fairness is not that simple. For deterministic house allocation problems, ideal fairness notions, such as \textsl{envy-freeness}, are not suitable. One way to define fairness in this context is ``justified fairness.'' This means that agents are ordered according to some priorities, and agents can only complain that an allocation is unfair if it violates those priorities. This concept is relevant in various settings, such as school choice problems, where ``pairwise stability'' is a notable property \citep{abdulkadirouglu2003}.

In this direction, \cite{svensson1994} proposes \textsl{weak fairness}, which posits that there is a common priority order $\pi$ over the agents, and agents with higher priorities 
never envy other agents with lower priorities.\footnote{By a ``common'' priority we mean that each object prioritizes the agents in the same order.}
This property is suitable in some cases, such as queuing at a theme park, where an agent who is positioned earlier should be served before anyone behind him. However, it may not be suitable for other cases. For instance, consider the school choice problem where schools' priorities are merit-based, i.e., all schools prefer students with higher exam scores (or GPAs). Such priorities are common in practice; for example, they are determined by entrance exams in parts of Asia, including China, Japan, and South Korea. However, as a proxy for students' abilities, exam scores are noisy in the sense that they may be influenced by other environmental factors. That is, even if student A is naturally gifted while student B is not, if B comes from a wealthy family and has had better educational opportunities, B may perform better than A in the entrance exam. Thus, in practice, some underrepresented students are re-ordered in the exam-based common priority, e.g., \textsl{zhibiaosheng} in China.\footnote{In China, \textit{zhibiaosheng} can be viewed as privilege for underrepresented students: a student from the poor region may be selected as a \textsl{zhibiaosheng}, and hence he will be given some bonus points $\beta$ for the exam. For instance, if his original exam score is $X$, then his adjusted score is $X+\beta$. See \cite{kesten2024} for details.}
This means that even if there is a well-founded priority $\pi$, sometimes the planner may choose not to follow it strictly when determining allocations. The decision of when to apply $\pi$ is often subject to discretion and can vary on a case-by-case basis.

Motivated by this observation, we consider the minimal level at which to apply $\pi$. Specifically, we ask: when could the planner justify a violation of $\pi$? If two agents report differing preferences, then the planner could impute any resulting envy between them solely to these differences.
If, on the other hand, the two agents report the same preferences, then there are no differences to attribute envy to, and the planner could not justify a deviation from $\pi$. In other words, the planner must absolutely respect $\pi$ when allocating to agents that are ``preference-identical.''\footnote{\label{footnote}In this paper, when two agents share the same preference, we treat them as identical, because in our model, agents are identified by their preferences. In other words, we implicitly assume that agents' types (or characteristics) are represented by their preferences. This assumption is commonly used in many studies on mechanism design and market design. For instance, in Bayesian games, it is common to assume that agents' payoffs depend on action profiles and their own types. That is, each agent $i$ has a utility function $u(a,t_i)$, where $a$ is the action profile and $t_i$ is agent $i$'s type. Therefore, for any action profile and two distinct types, $u(a,t_i)\neq u(a,t_j)$.}

Based on this principle, we consider two versions of justified fairness: one global and one local. Loosely speaking, the global version states that there is a universal priority over agents. If two agents are preference-identical
we use this priority to rank them and determine their allotments. In other words, the higher-ranked agent receives a better outcome than the lower-ranked agent whenever they are preference-identical.
The local version is more flexible: once we have two preference-identical agents, we will rank them based on some priority (or a tie-break rule). However, this priority may not be constant and may vary depending on other agents' characteristics, i.e., their preferences. We call these properties \textit{globally constant tie-breaking} and \textit{locally constant tie-breaking}, respectively. 
Note that our properties do not impose any restrictions on the allotments of two agents who are not preference-identical.

Interestingly, our results show that, together with one additional incentive property, our justified fairness properties characterize two typical classes of priority rules, namely serial dictatorships and sequential dictatorships. To be more precise, we show that 
\begin{itemize}
	\item a \textsl{strategy-proof} mechanism satisfies \textsl{globally constant tie-breaking} if and only if it is a serial dictatorship (Theorem~\ref{thm0}), and 
	\item a \textsl{group strategy-proof} mechanism satisfies \textsl{locally constant tie-breaking} if and only if it is a sequential  dictatorship (Theorem~\ref{thm1}).
\end{itemize}
Since our properties require the justification of envy only at the minimal level, our results can be interpreted as impossibility results: there is no non-dictatorial mechanism that satisfies \textsl{(group) strategy-proofness} and even our weak notions of justified fairness.


Our results provide a fresh perspective to understand dictatorships in terms of fairness. In particular, we consider one of the oldest fairness notions, the \textsl{identical preferences lower bound} \citep{Steinhaus1948}, and we show that in the presence of \textsl{strategy-proofness}, it is implied by \textsl{globally constant tie-breaking} (Corollary~\ref{corollary}).
Given that \textsl{strategy-proofness} can also be viewed as a fairness property \citep{pathak2008,hitzig2020}, Corollary~\ref{corollary} highlights the explicit relationship between three fairness properties.

\subsection{Related literature}
The paper closest to ours is \cite{svensson1994}, which proposes \textsl{weak fairness} and constructs a mechanism that achieves this property on the full preference domain. There are also many other papers that study fairness properties, mainly focusing on allocation problems with divisibilities, e.g., \cite{bogomolnaia2001}. \cite{moulin2019} and \cite{amanatidis2023} provide excellent surveys on fairness in the context of indivisibility.

As in \cite{svensson1999} and \cite{ergin2000}, we provide a new perspective to axiomatically analyze dictatorships. It is interesting to note that while serial dictatorships are studied extensively in the literature, the more general class of sequential dictatorships has received little attention in environments with unit-demand. To the best of our knowledge, the closest work is \cite{pycia2023}, which characterizes the hybridization between sequential dictatorships and majority voting by \textsl{strategy-proofness} and \textsl{Arrovian efficiency with respect to a complete social welfare function}.
It is important to note that sequential dictatorships are well-studied in environments when agents can consume more than one unit, e.g., \cite{papai2001}, \cite{klaus2002}, and \cite{ehlers2003scw}.
There is a parallel body of literature focusing on the set of efficient and \textsl{group strategy-proof} mechanisms, including sequential dictatorships as special cases, e.g., \cite{papai2000} and \cite{pycia2017}. 

These studies highlight one significant difference between our work and prior research: most papers focus on efficiency to characterize dictatorships, whereas we focus on fairness.
Moreover, since we know dictatorships are efficient, our fairness properties are weak enough to guarantee the compatibility between fairness and efficiency, in the presence of \textsl{strategy-proofness}. In this respect, our results are closely related to other papers that study the trade-off between efficiency and fairness among \textsl{strategy-proof} mechanisms, e.g., \cite{liu2016} and \cite{nesterov2017}.

\subsection{Organization}
The rest of our paper is organized as follows. In Section~\ref{sec:model}, we introduce our model, mechanisms and their properties, and we give a description of serial dictatorships and sequential dictatorships. 
We state our main results in Section~\ref{sec:result}, and we provide several examples to establish the logical independence of the properties in our characterizations in Subsection~\ref{sec:independent}.
In Section~\ref{sec:discussion}, we conclude with a discussion between our properties and three well-known properties, namely, the \textsl{identical preferences lower bound}, \textsl{Pareto efficiency}, and \textsl{neutrality}.
Appendix~\ref{appendix:proof} contains all proofs that were omitted from the main text. Appendix~\ref{appendix: variable populations} shows some additional relations between our \textsl{globally constant tie-breaking} and ``pairwise consistency'' in a model with variable populations.

\section{Preliminaries}
\label{sec:model}
\subsection{House allocation problems}
We consider object allocation problems without monetary transfers formed by a group of agents and a set of indivisible objects. Let $N=\{1,\ldots,n\}$ be a finite set of \textit{agents} and $O=\{o_1,\ldots,o_n\}$ be a finite set of indivisible \textit{objects}, say \textit{houses}. 
A nonempty subset $S\subseteq N$ is a \textit{coalition}.

An \textit{allocation} $x:N\to O$ is a bijection, that assigns to each agent $i\in N$ an object $x(i)\in O$. Note that for any two distinct agents $i,j\in N$, $x(i)\neq x(j)$.
The \textit{set of all allocations} is denoted by $X$. Given an allocation $x$, for each $i\in N$, we refer to $x(i)$ as agent $i$'s \textit{allotment}. For simplicity, we often denote an allocation $x$ as a list $x=(x_1,\ldots,x_n)$. Given an allocation $x$, an agent $i\in N$, and a coalition $S\subseteq N$, let $x_{-i} = (x_j)_{j\in N\setminus\{i\}}$ be the list of all agents' allotments, except for agent $i$’s allotment, and let $x_S = (x_i)_{i\in S}$ be the list of allotments of the members of $S$. 

We assume that each agent $i\in N$ has \textit{complete}, \textit{antisymmetric}, and \textit{transitive} \textit{preferences $R_i$} over objects, i.e., $R_i$ is a linear order over $O$.\footnote{Preferences $R_i$ are \textit{complete} if for any two allotments $x_i,y_i$, $x_i\mathbin{R_i} y_i$ or $y_i\mathbin{R_i} x_i$; they are \textit{antisymmetric} if for any two allotments $x_i,y_i$, $x_i\mathbin{R_i} y_i$ and $y_i\mathbin{R_i} x_i$ imply $x_i=y_i$; and they are \textit{transitive} if for any three allotments $x_i,y_i,z_i$, $x_i\mathbin{R_i} y_i$ and $y_i\mathbin{R_i} z_i$ imply $x_i\mathbin{R_i} z_i$.} For two allotments $x_i$ and $y_i$, $x_i$ is \textit{weakly preferred to} $y_i$ if $x_i\mathbin{R_i} y_i$, and $x_i$ is \textit{strictly preferred to} $y_i$ if [$x_i \mathbin{R_i} y_i$ and not $y_i \mathbin{R_i} x_i$], denoted by $x_i \mathbin{P_i} y_i$. Finally, since preferences over allotments are strict, agent~$i$ is indifferent between $x_i$ and $y_i$ only if $x_i=y_i$. We often denote preferences as ordered lists, e.g., $R_i: x_i,\ y_i,\ z_i$ instead of $x_i\mathbin{P_i}y_i\mathbin{P_i}z_i$. The \textit{set of all (strict) preferences} is denoted by $\mathcal{R}$. For each nonempty subset of objects $A \subseteq O$, $\operatorname{top}_{R_i}(A)$ denotes the best object in $A$ according to $R_i$, i.e., $\operatorname{top}_{R_i}(A) \in A$ and for all $o \in A$, $\operatorname{top}_{R_i}(A) \mathbin{R_i} o$. 

A \textit{preference profile} is a list $R =(R_1,\ldots, R_n) \in \mathcal{R}^N$. For each preference profile $R\in \mathcal{R}^N$ and each agent $i\in N$, we use the standard notation $R_{-i} = (R_j)_{j\in N\setminus\{i\}}$ to denote the list of all agents’ preferences, except for agent $i$'s preferences. For each subset of agents $S\subseteq N$ we define $R_S=(R_i)_{i\in S}$ and $R_{-S}=(R_i)_{i\in N\setminus S}$ to be the
lists of preferences of the members of sets $S$ and $N\setminus S$, respectively. 

A \textit{house allocation problem}, a \textit{problem} for short, is a triple $(N,O,R)$; as the set of agents and houses remain fixed throughout, we will simply denote a problem by the corresponding preference profile $R$.
Thus, the set of preference profiles $\mathcal{R}^N$ also denotes the set of all problems.

\subsection{Mechanisms and properties}
A \textit{mechanism} is a function $f:\mathcal{R}^N\to X$ that selects for each problem $R$ an allocation $f(R) \in X$. For each $i\in N$, $f_i(R)$  denotes \textit{agent $i$'s allotment}. 

We next introduce and discuss some well-known properties for allocations and mechanisms.

\subsubsection*{Efficiency properties.}
Here we consider several well-known efficiency criteria. 
The first one is \textsl{Pareto efficiency}. Let $R\in \mathcal{R}^N$.
An allocation $x$ is \textit{Pareto efficient} at $R$ if there is no allocation $y$ such that for each agent $i\in N$, $y_i\mathbin{R_i} x_i$ and for some agent $j\in N$, $y_j\mathbin{P_j} x_j$.

\begin{definition}[\textbf{Pareto efficiency}]\ \\ A mechanism $f$ satisfies \textit{Pareto efficiency} if for each $R\in \mathcal{R}^N$, $f(R)$ is \textsl{Pareto efficient} at $R$.
\end{definition}

Next, we consider an efficiency criterion that rules out efficiency improvements by pairwise reallocation \citep{ekici2022}.\footnote{\cite{ekici2022} originally refers to it as ``pair efficiency.''} 
An allocation $x$ is \textit{pairwise efficient} at $R$ if no pair of agents want to swap their allotments, i.e., there is no pair of agents $\{i,j\}\subseteq N$ such that $x_i\mathbin{P_j}x_j$ and $x_j\mathbin{P_i}x_i$. 
\medskip

\begin{definition}[\textbf{Pairwise efficiency}]\ \\ A mechanism $f$ satisfies \textit{pairwise efficiency} if for each $R\in \mathcal{R}^N$, $f(R)$ is \textsl{pairwise efficient} at $R$.
\end{definition}



By the definitions of the three efficiency properties, it is easy to see that \textsl{Pareto efficiency} implies \textsl{pairwise efficiency}. 
\medskip

\subsubsection*{Incentive properties.}

The next two properties are incentive properties that ensure that no agent or subset of agents can benefit from misrepresenting their preferences.

\begin{definition}[\textbf{Strategy-proofness}]\ \\
	A mechanism $f$  satisfies \textit{strategy-proofness} if for each $R \in \mathcal{R}^N$, each agent $i\in N$, and each preference relation $R'_i\in \mathcal{R}$, $f_i(R_i,R_{-i}) \mathbin{R_i} f_i(R'_i,R_{-i})$, i.e., \textit{no agent $i$ can manipulate mechanism $f$ at $R$ via $R'_i$}.
\end{definition}

\begin{definition}[\textbf{Group strategy-proofness}]\ \\	
	A mechanism $f$  satisfies \textit{group strategy-proofness} if for each $R \in \mathcal{R}^N$, there is no group of agents $S\subseteq N$ and no preference list $R'_S=(R'_i)_{i\in S}\in \mathcal{R}^S$ such that for each $i\in S$, $f_i(R'_S,R_{-S}) \mathbin{R_i} f_i(R)$, and for some $j\in S$, $f_j(R'_S,R_{-S}) \mathbin{P_j} f_j(R)$, i.e., \textit{no group of agents $S$ can manipulate mechanism $f$ at $R$ via $R'_S$}.
\end{definition}

Aside from being incentive properties, \textsl{strategy-proofness} and \textsl{group strategy-proofness} also represent a certain notion of fairness as they level the playing field by diminishing the harm done to
agents who do not strategize or do not strategize well \citep{pathak2008}.

Next, we consider a well-known property that restricts each agent's influence: a mechanism is \textsl{non-bossy} if whenever a change in an agent's reported preferences does not bring about a change in his own allotment, then it does not bring about a change in other agents' allotments either.

\begin{definition}[\textbf{Non-bossiness}]\ \\
	A mechanism $f$ satisfies \textit{non-bossiness} if
	for each $R \in \mathcal{R}^N$, each agent $i\in N$, and each $R'_i\in \mathcal{R}$, $f_i(R_i,R_{-i}) =f_i(R'_i,R_{-i})$ implies $f(R_i,R_{-i}) =f(R'_i,R_{-i})$.
\end{definition}

It is known that, for our model, \textsl{group strategy-proofness} is equivalent to the combination of \textsl{strategy-proofness} and \textsl{non-bossiness} \citep[see, e.g.,][]{papai2000,alva2017}. 

We next introduce the well-known property of \textit{(Maskin) monotonicity}, which requires that if an allocation is chosen, then that allocation will still be chosen if each agent shifts it up in his preferences.

Let $i\in N$. Given preferences $R_i\in \mathcal{R}$ and an object $o$, let $L(o,R_i)=\{o'\in O\mid o\mathbin{R_i}o'\}$ be the \textit{lower contour set of $R_i$ at $o$}. Preference relation $R'_i$ is a \textit{monotonic transformation of $R_i$ at $o$} if $L(o,R_i)\subseteq L(o,R'_i)$.
Similarly, given a preference profile $R\in \mathcal{R}^N$ and an allocation $x$, a preference profile $R'\in \mathcal{R}^N$ is a \textit{monotonic transformation of $R$ at $x$} if for each $i\in N$, $R'_i$ is a monotonic transformation of $R_i$ at $x_i$.

\begin{definition}[\textbf{Monotonicity}]\ \\
	A mechanism $f$ satisfies \textit{monotonicity} if for each $R\in \mathcal{R}^N$ and for each monotonic transformation $R'\in \mathcal{R}^N$ of $R$ at $f(R)$, $f(R')=f(R)$.
\end{definition}

It is well-known that, for our model, \textsl{group strategy-proofness} is equivalent to \textsl{monotonicity} \citep{alva2017}. 

The relationship between these properties is summarized below.
\begin{fact}
	The following three statements are equivalent.
	\begin{itemize}
		\item A mechanism is \textsl{group strategy-proof}.
		\item A mechanism is \textsl{strategy-proof} and \textsl{non-bossy}.
		\item A mechanism is \textsl{monotonic}.
	\end{itemize}
\end{fact}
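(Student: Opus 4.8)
\medskip
\noindent\textbf{Proof proposal.} These equivalences can be assembled from the references cited above, but a self-contained argument is short: the plan is to prove the cycle of implications in which group strategy-proofness implies strategy-proofness together with non-bossiness; strategy-proofness and non-bossiness together imply monotonicity; and monotonicity implies group strategy-proofness.

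For the first implication, strategy-proofness follows by restricting group strategy-proofness to singleton coalitions. For non-bossiness, suppose toward a contradiction that some agent $i$ can switch from $R_i$ to $R'_i$ with $f_i(R_i,R_{-i})=f_i(R'_i,R_{-i})$ but $f(R_i,R_{-i})\neq f(R'_i,R_{-i})$; write $x=f(R_i,R_{-i})$, $y=f(R'_i,R_{-i})$, and $T=\{j\in N:x_j\neq y_j\}$, a nonempty set not containing $i$. Since $x$ and $y$ are bijections that agree outside $T$, the lists $x_T$ and $y_T$ allocate the same set of objects, so either some $j\in T$ has $y_j\mathbin{P_j}x_j$ or every $j\in T$ has $x_j\mathbin{P_j}y_j$. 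In the first case, the coalition $\{i\}\cup\{j\in T:y_j\mathbin{P_j}x_j\}$ manipulates at $(R_i,R_{-i})$: only $i$ alters his report, replacing $R_i$ by $R'_i$, the other members report truthfully, the resulting allocation is $y$, agent $i$ is indifferent, and the remaining members strictly gain; this contradicts group strategy-proofness. The second case is symmetric, with $\{i\}\cup T$ manipulating at $(R'_i,R_{-i})$.

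For the third implication, let $f$ be monotonic and suppose a coalition $S$ manipulates at $R$ via $R'_S$, reaching $y=f(R'_S,R_{-S})$ with $y_i\mathbin{R_i}x_i$ for every $i\in S$ and $y_j\mathbin{P_j}x_j$ for some $j\in S$, where $x=f(R)$. Let $R''$ be the profile in which each agent in $S$ ranks his $y$-allotment first (ordering the remaining objects as under $R$) and every other agent reports as in $R$. Since $y_i\mathbin{R_i}x_i$ for $i\in S$, $R''$ is a monotonic transformation of $R$ at $x$, so $f(R'')=x$; but $R''$ is also a monotonic transformation of $(R'_S,R_{-S})$ at $y$, so $f(R'')=y$. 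Hence $x=y$, contradicting $y_j\mathbin{P_j}x_j$.

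The remaining implication — strategy-proofness and non-bossiness imply monotonicity — is the step I expect to require the most care. Given a monotonic transformation $R'$ of $R$ at $x:=f(R)$, the plan is to rewrite the agents' reports from $R$ to $R'$ one agent at a time and to show by induction on the number of agents switched that the allocation always remains $x$. Consider the step in which agent $i$ switches from $R_i$ to $R'_i$; let $x$ be the allocation just before this switch (by the inductive hypothesis) and $z$ the one just after. Strategy-proofness at the pre-switch profile gives $x_i\mathbin{R_i}z_i$, which upgrades to $x_i\mathbin{R'_i}z_i$ because $R'_i$ is a monotonic transformation of $R_i$ at $x_i$; strategy-proofness at the post-switch profile gives $z_i\mathbin{R'_i}x_i$. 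By antisymmetry of $R'_i$ we get $z_i=x_i$, so agent $i$'s allotment does not change across the switch, and then non-bossiness forces $z=x$, completing the induction. The delicate point is the simultaneous use of the two strategy-proofness inequalities, which only combine after the monotonic-transformation ``upgrade''; the other place needing some ingenuity is the construction of the manipulating coalition in the proof that group strategy-proofness implies non-bossiness.
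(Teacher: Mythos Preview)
The paper does not actually prove this fact: it is stated as a known result with citations to \cite{papai2000} and \cite{alva2017}, and no argument is given in the text or the appendix. Your self-contained cycle of implications is therefore strictly more than what the paper offers, and each of the three steps is correct. In particular, the two places you flagged as delicate are handled properly: in the non-bossiness step, the coalition $\{i\}\cup\{j\in T:y_j\mathbin{P_j}x_j\}$ (with only $i$ deviating) is a valid group manipulation because the definition of group strategy-proofness allows some coalition members to be merely indifferent; and in the monotonicity step, the ``upgrade'' $x_i\mathbin{R_i}z_i\Rightarrow x_i\mathbin{R'_i}z_i$ is exactly the content of $L(x_i,R_i)\subseteq L(x_i,R'_i)$. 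The construction of $R''$ in the third implication also checks out: $y_i\mathbin{R_i}x_i$ guarantees that moving $y_i$ to the top of $R_i$ does not shrink the lower contour set at $x_i$, so $R''$ is simultaneously a monotonic transformation of $R$ at $x$ and of $(R'_S,R_{-S})$ at $y$.
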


\subsubsection*{Fairness properties.}

Our first fairness property requires that no agent ever envies any other agent. Given a preference profile $R$ and an allocation $x$, we say that an agent $i$ \textit{envies} another agent $j$ at $x$ given $R$ if he prefers agent $j$'s allotment over his own, i.e., $x_j \mathbin{P_i} x_i$. We say that $x$ is \textit{envy-free} at $R$ if no agent \textsl{envies} any other agent at $x$ given $R$, i.e., there is no pair of agents $\{i,j\}\subseteq N$ such that 
$x_i \mathbin{P_j}x_j$. 


\begin{definition}[\textbf{Envy-freeness}]\ \\ 
	A mechanism $f$ satisfies \textit{envy-freeness} if for each $R\in \mathcal{R}^N$, $f(R)$ is \textsl{envy-free}.
\end{definition}

While \textsl{envy-freeness} is appealing, it is not possible in our setting. 
To see this point, let us consider the following extreme case: when two agents have identical preferences, one of them always \textsl{envies} the other due to the scarcity of houses.

One way to resolve this issue is to prohibit only envy which is not ``justified'' according to some priority ordering over the agents. Priority orderings arise naturally in many applications. For example, individuals are prioritized according to their need for public housing, while schools prioritize students based on exam scores and proximity \citep{abdulkadirouglu2003}. As in \cite{svensson1994}, we first focus on situations where all objects share a common priority over agents, and we treat this priority as exogenously given. 

Let a bijection $\pi:N\to N$ represent such a common priority. Agent~$i$ precedes agent~$j$ in this order if $\pi^{-1}(i) < \pi^{-1}(j)$; in this case, we say that agent~$i$ has ``higher priority'' than agent~$j$ or that agent~$j$ has ``lower priority'' than agent~$i$. A mechanism is \textit{weakly fair} if there exists a priority $\pi$ such that no agent ever envies another agent with lower priority \citep{svensson1994}. 


\begin{definition}[\textbf{Weak fairness}]\ \\ 
	A mechanism $f$ satisfies \textit{weak fairness} if there exists a priority $\pi$ such that for each $R\in \mathcal{R}^N$, if $\pi^{-1}(i) < \pi^{-1}(j)$, then $f_i(R)\mathbin{R_i}f_j(R)$. 
\end{definition}
\medskip


We now introduce a property which, like \textsl{weak fairness}, eliminates envy that cannot be justified on the basis of some priority $\pi$. However, our property is more flexible as it only requires the justification of envy between agents that are ``preference-identical.''


\begin{definition}[\textbf{Globally constant tie-breaking}]\ \\ A mechanism $f$ satisfies \textit{globally constant tie-breaking} if it satisfies either of the following equivalent conditions:
	\begin{itemize}
		\item[(1)] There exists a priority $\pi$ such that for each $R\in \mathcal{R}^N$ and any pair $\{i,j\}$ of agents, if $\pi^{-1}(i) < \pi^{-1}(j)$ and $R_i=R_j$,  then $f_i(R)\mathbin{R_i}f_j(R)$. 
		\item[(2)] For each $R,R'\in \mathcal{R}^N$, if there are two agents $i,j\in N$ such that $R_i=R_j$ and $R'_i=R'_j$, then $f_i(R)\mathbin{R_i}f_j(R)$ if and only if $f_i(R')\mathbin{R_i}f_j(R')$.
	\end{itemize}
\end{definition}

\medskip

We provide two equivalent formulations of our notion in order to facilitate an easier comparison with other properties. Formulation (1) makes it clear that \textsl{globally constant tie-breaking} is a relaxation of \textsl{weak fairness} \citep{svensson1994}. Formulation (2) is similar in spirit to \emph{weak uniform tie-breaking}, which is used by \cite{dougan2018} to characterize the \emph{Immediate Acceptance} mechanisms in school choice problems. Intuitively, it says that the selected mechanism contains a tie-breaking rule among agents, and we apply it only when two agents are ``preference-identical.'' Importantly, the tie-breaking rule between agents $i$ and $j$ is applied ``globally'' in the sense that it does not depend on the preferences of the other agents in $N \setminus \{i,j\}$.

Our property can be viewed as a minimal requirement for applying fairness with respect to an exogenous priority or right. To illustrate this, consider the case where our justifying principle, $\pi$, is based purely on students' exam scores. Underrepresented groups may perform worse on the exam due to socioeconomic disadvantages. Therefore, the social planner may wish to violate the priority $\pi$ by allowing a student~$i$ to envy a disadvantaged student~$j$ with a lower exam score. If the two students report different preferences,\footnote{Recall that in our model, we implicitly assume that agents' characteristics are represented by their preferences (See Footnote~\ref{footnote}). Thus, if two agents have different preferences, then they also have different characteristics.} then the envy could be justified on the basis of their reported preferences alone. However, if $i$ and $j$ report the same preferences, then the envy would not be justified. In the latter case, the planner has no choice but to respect the priority $\pi$.



\begin{remark}\label{relation with pairwise consistency}
Interestingly, our \textsl{globally constant tie-breaking} is weaker than two prominent properties that feature in other characterizations of serial dictatorships. We show in Appendix~\ref{appendix: variable populations} that, in the related model with variable populations considered in \cite{ergin2000}, \textsl{globally constant tie-breaking} is implied by the conjunction of ``pairwise consistency'' and ``pairwise neutrality.''\footnote{See Appendix~\ref{appendix: variable populations} for the definitions of ``pairwise consistency'' and ``pairwise neutrality.''}
\end{remark}

We now consider a weak version of \textsl{globally constant tie-breaking}, in which the tie-break rule between agents $i$ and $j$ may depend on the preferences of the other agents (but it is independent of the preferences of agents $i$ and $j$). Let $N_2$ denote the set of all two-agent subsets of $N$, i.e., $N_2 \coloneqq \{I \in 2^N \mid |I| = 2\}.$ A \emph{local tie-break rule} is a collection of functions $\{t_{I}: \mathcal{R}^{N \setminus I} \to I \}_{I \in N_2}$. Here, for any pair of agents $\{i,j\} \in N_2$, and any $R_{N \setminus \{i,j\}} \in \mathcal{R}^{N \setminus \{i,j\}}$, $t_{\{i,j\}}(R_{N \setminus \{i,j\}}) = i$ means that agent~$i$ has higher priority than agent~$j$ when the other agents report $R_{N \setminus \{i,j\}}$.
A mechanism $f$ satisfies \textit{locally constant tie-breaking} if any envy between preference-identical agents can be justified by some local tie-break rule.

\begin{definition}[\textbf{Locally constant tie-breaking}]\ \\ 
	A mechanism $f$ satisfies \textsl{locally constant tie-breaking} if it satisfies either of the following equivalent properties:
	\begin{itemize}
		\item[(3)] There exists a local tie-break rule $\{t_I\}_{I \in N_2}$ such that, for all $R \in \mathcal{R}^{N}$ and any pair $\{i,j\}$ of agents, if $t_{\left\{ i,j\right\} }\left(R_{N\backslash\left\{ i,j\right\} }\right)=i$ and $R_i = R_j$, then $f_i(R) \mathbin{R_i} f_j(R)$.
		\item[(4)] For each $R, R' \in \mathcal{R}^N$, if there are two agents $i,j\in N$ such that $R_i=R_j$ and $R'_i = R'_j$, then $f_i(R)\mathbin{R_i} f_j(R)$ if and only if $f_i(R'_i,R'_j,R_{N\setminus\{i,j\}})\mathbin{R'_i} f_j(R'_i,R'_j,R_{N\setminus\{i,j\}})$.
	\end{itemize}

\end{definition}
\medskip

Recall that our first explanation of \textsl{globally constant tie-breaking} is that there is a common priority of houses over agents.
Economically, \textsl{locally constant tie-breaking} means that the priorities used to justify envy between agents that are ``preference-identical'' may vary across different profiles. 
We provide one interpretation to explain this flexibility.

Consider the dynamic implementation of the mechanism, where the priority may be adaptive and updated \citep{celebi2023}.
	This means that in each period, we assign some houses to some agents, and then the priority is updated based on the assignment history. In this type of implementation, the priority between two agents also depends on the other agents' allotments (and their preferences).

\subsection{Dictatorships}
We now introduce two classes of mechanisms that are the central focus of this paper: \textit{serial dictatorships} and \textit{sequential dictatorships}. 

Given a priority $\pi$, the \textit{serial dictatorship associated with $\pi$}, denoted $f^\pi$, is the mechanism which, at each profile, assigns agent~$\pi (1)$ his best object, agent~$\pi (2)$ his best object among the remaining objects, agent~$\pi (3)$ his best object among the remaining objects, and so on. Formally, for all $R \in \mathcal{R}^N$, $f^\pi(R)$ is determined recursively via $f_{\pi \left(1\right)}^{\pi}\left(R\right) =\operatorname{top}_{R_{\pi \left(1\right)}}\left(O\right) $ and, for each $k \in \{2, \dots, n\}$,
\begin{equation*}
f_{\pi \left(k\right)}^{\pi}\left(R\right)	=\operatorname{top}_{R_{\pi\left(k\right)}}\left(O\backslash\bigcup_{t=1}^{k-1}\left\{ f_{\pi \left(t\right)}^{\pi}\left(R\right)\right\} \right).
\end{equation*}
A mechanism $f$ is called a \textit{serial dictatorship} if there exists a priority $\pi$ such that $f = f^\pi$.
\medskip

A \textit{sequential dictatorship} works as follows. There is a first ``dictator'' who is assigned his best object at every preference profile. A second dictator, whose identity is determined by the first dictator's allotment, is assigned his best object among the remaining objects; a third dictator, whose identity is determined by the identities of the previous dictators as well as their allotments, is again assigned his best object among the remaining objects, and so on.

Formally, let $\Sigma(N)$ denote the set of permutations on $N$, i.e., the set of priorities. A mechanism $f$ is called a \textit{sequential dictatorship} if there exists a function $\sigma : \mathcal{R}^N \to \Sigma(N)$ such that, for all $R \in \mathcal{R}^N$, $f(R) = f^{\sigma(R)}(R)$ and, in addition, $\sigma$ satisfies the following ``consistency'' requirements:

\begin{itemize}
	\item[(C1)] for all $R, R' \in \mathcal{R}^N$, $\sigma(R) (1) = \sigma(R') (1)$;
	\item[(C2)] for all $R, R' \in \mathcal{R}^N$, and all $k \in \{2, \dots, n\}$, if $\sigma(R) (\ell) = \sigma(R') (\ell)$ and $f^{\sigma(R)}_{\sigma(R)(\ell)}(R) = f^{\sigma(R')}_{\sigma(R')(\ell)}(R')$ for all $\ell \in \{1, \dots, k-1\}$, then $\sigma(R) (k) = \sigma(R') (k)$.
\end{itemize}

Property (C1) requires that, at each preference profile, the first dictator is the same. Property (C2) requires that, if $R$ and $R'$ are preference profiles in which (i) the first $k-1$ dictators are the same, and (ii) the first $k-1$ dictators are assigned the same objects, then the $k$th dictator is the same at both profiles $R$ and $R'$. In other words, the identity of the $k$th dictator depends only on the identities of the preceding $k-1$ dictators, together with the objects assigned to them. Clearly, every \textsl{serial dictatorship} is a \textsl{sequential dictatorship}.


\section{Results}
\label{sec:result}
We first provide a characterization of all \textsl{strategy-proof} mechanisms that satisfy our global notion of justified fairness, \textsl{globally constant tie-breaking}. The following theorem says that every such mechanism is a serial dictatorship.

\begin{theorem}
	A mechanism satisfies
	\begin{itemize}
		\item \textsl{strategy-proofness}, and
		\item \textsl{globally constant tie-breaking}
	\end{itemize}
	if and only if it is a serial dictatorship. 
	
 \label{thm0}
\end{theorem}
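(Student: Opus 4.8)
The plan is to prove the two implications separately; the ``if'' direction is routine and all the substance is in the ``only if'' direction.

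\emph{The ``if'' direction.} Let $f=f^{\pi}$ be a serial dictatorship. Strategy-proofness is standard: the set $A_k$ of objects still available when agent $\pi(k)$ is served depends only on $R_{\pi(1)},\dots,R_{\pi(k-1)}$, so agent $\pi(k)$ cannot do better than reporting truthfully and receiving $\operatorname{top}_{R_{\pi(k)}}(A_k)$. For \textsl{globally constant tie-breaking} I would use the priority $\pi$ itself in condition~(1): if $\pi^{-1}(i)<\pi^{-1}(j)$ and $R_i=R_j$, then the object assigned to $i$ has already been removed by the time $j$ is served, so $A_j\subseteq A_i$; since $R_i=R_j$, $f_i(R)=\operatorname{top}_{R_i}(A_i)\mathbin{R_i}\operatorname{top}_{R_i}(A_j)=f_j(R)$.

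\emph{The ``only if'' direction.} Let $f$ be strategy-proof and satisfy \textsl{globally constant tie-breaking}, fix a priority $\pi$ witnessing condition~(1), and relabel the agents so that $\pi$ is the identity. I would argue by induction on $n$; the case $n=1$ is trivial. The crux is to show that agent~$1$ is a dictator, i.e.\ $f_1(R)=\operatorname{top}_{R_1}(O)$ for all $R$. First check this on \emph{constant} profiles $R=(Q,\dots,Q)$: all agents are pairwise preference-identical, so condition~(1) gives $f_1(R)\mathbin{Q}f_j(R)$ for every $j$, and since $\{f_1(R),\dots,f_n(R)\}=O$ this forces $f_1(R)=\operatorname{top}_Q(O)$ (indeed $f(R)=f^{\pi}(R)$ on every constant profile). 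To extend to an arbitrary $R$, recall that strategy-proofness provides an \emph{option set} $O_1(R_{-1})\subseteq O$, independent of $R_1$, with $f_1(R_1,R_{-1})=\operatorname{top}_{R_1}(O_1(R_{-1}))$; it then suffices to prove $O_1(R_{-1})=O$ for every $R_{-1}$. Starting from the constant profile $(R_1,\dots,R_1)$, where agent~$1$'s option set is all of $O$, I would change the other agents' preferences one at a time to their components of $R$, at each step keeping a second agent preference-identical to agent~$1$ and invoking condition~(2) of \textsl{globally constant tie-breaking} (comparing the allotments of agent~$1$ and that second agent across the two profiles) to conclude that no object ever leaves agent~$1$'s option set.

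Once agent~$1$ is a dictator, a parallel use of strategy-proofness and condition~(2) should show that $f$ is non-bossy in agent~$1$'s report: $f_{-1}(R_1,R_{-1})$ depends on $R_1$ only through $o:=\operatorname{top}_{R_1}(O)$. Consequently, for each object $o$ the map $R_{-1}\mapsto f_{-1}(R_1,R_{-1})$ (for any $R_1$ with $\operatorname{top}_{R_1}(O)=o$) is a well-defined mechanism $g^{o}$ on the agents $N\setminus\{1\}$ and the objects $O\setminus\{o\}$; one verifies that each $g^{o}$ is strategy-proof and satisfies \textsl{globally constant tie-breaking} with the priority $2,3,\dots,n$ inherited from $\pi$ (condition~(2) for $g^{o}$ is immediate from condition~(2) for $f$ restricted to pairs in $N\setminus\{1\}$). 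By the induction hypothesis each $g^{o}$ is the serial dictatorship with priority $2,\dots,n$; since this priority is the same for every $o$ and agent~$1$ always takes his top object, assembling the $g^{o}$ yields $f=f^{\pi}$.

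\emph{Main obstacle.} The hard part is the propagation step, i.e.\ upgrading ``agent~$1$ dictates on constant profiles'' to ``$O_1(R_{-1})=O$ for all $R_{-1}$'': strategy-proofness alone does not prevent the other agents from bossing agent~$1$'s allotment around (a strategy-proof but bossy ``first dictator'' mechanism is easy to write down), so condition~(2) must be used in an essential way here, and the same point reappears when establishing the non-bossiness of agent~$1$ needed to define the reduced mechanisms $g^{o}$.
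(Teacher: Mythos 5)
There is a genuine gap, and it sits exactly where you locate your ``main obstacle.'' Your anchor claim for the propagation --- that at the profile where agents $2,\dots,n$ all report $R_1$ the option set $O_1(R_1,\dots,R_1)$ is all of $O$ --- is not established and does not follow from the constant-profile observation: the constant-profile argument only pins down $f$ at profiles where agent~$1$ \emph{also} reports the common preference, whereas $O_1(R_{-1})=O$ requires knowing $f_1(R_1',R_1,\dots,R_1)=\operatorname{top}_{R_1'}(O)$ for every deviation $R_1'$, which is itself a nontrivial instance of the dictatorship statement you are trying to prove (strategy-proofness only gives $f_1(R_1',R_1,\dots,R_1)\mathbin{R_1'}\operatorname{top}_{R_1}(O)$, which is vacuous). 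More importantly, the propagation step itself is asserted rather than proved: you say that changing the other agents' preferences one at a time and ``invoking condition~(2)'' shows no object ever leaves agent~$1$'s option set, but condition~(2) only compares envy between a \emph{pair} of agents who are preference-identical at \emph{both} profiles, and it is not explained how this controls the whole option set (which involves arbitrary reports by agent~$1$, not only reports matching some still-unchanged agent); moreover at the final step of your path no agent need remain preference-identical to agent~$1$ at all. This is precisely the delicate part of the paper's argument: its Claim~1 is a downward induction in which the key ``second part'' rules out a low-priority agent with arbitrary preferences seizing an object earmarked for the block of higher-priority, preference-identical agents, via a chain of auxiliary profiles combining strategy-proofness with the tie-breaking chain $f_1\mathbin{R}f_2\mathbin{R}\cdots$. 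The natural fix is to run the path the other way around, as the paper does: fix agent~$1$'s report and morph the \emph{others} from the common preference to arbitrary ones, proving at each stage that the high-priority identical-preference agents keep their allotments; option sets are then unnecessary.

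A second, more structural problem is your induction on $n$ via the reduced mechanisms $g^{o}$ on $N\setminus\{1\}$ and $O\setminus\{o\}$. Even granting the partial non-bossiness in agent~$1$'s report, $g^{o}$ must be a mechanism on the domain of preferences over $O\setminus\{o\}$ for the induction hypothesis to apply, and its well-definedness requires that the outcome for agents $2,\dots,n$ not depend on how they rank the removed object $o$. Strategy-proofness gives each agent invariance of his \emph{own} allotment when he moves $o$ around in his report, but without \textsl{non-bossiness} (which Theorem~\ref{thm0} pointedly does not assume, and which the paper stresses as a distinguishing feature) other agents' allotments may change, so $g^{o}$ need not be well defined. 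The paper avoids this entirely by repeating the same fixed-profile induction for the second dictator (redefining $R^{*}$ with agent~$1$ arbitrary and agents $2,\dots,n$ identical), rather than reducing to a smaller economy. As it stands, your proposal correctly identifies the skeleton (constant profiles plus a propagation using strategy-proofness and the tie-breaking condition) but leaves the load-bearing steps unproved, and one of its two base claims is, in effect, circular.
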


The proof of uniqueness is given in Appendix. Here we only explain the intuition of the proof. 
We start with a preference profile where all agents share the same preferences. The agent who receives the most preferred house at this profile is identified as the first dictator, the agent who receives the second most preferred house is the second dictator, and so forth.
Next, we replace the identical preference profile with arbitrary preferences, one agent at a time, starting from the last dictator to the first dictator. 
We show that after these replacements, the first dictator still receives his most preferred house at the new profile. 
Inductively, by applying similar arguments, we show that the second dictator still receives his most preferred house among the remaining, and so forth.
\medskip

We would like to make two additional remarks to emphasize the significance of our first characterization.

First, it is quite interesting that there is no \textsl{non-bossiness} in Theorem~\ref{thm0}. The absence of \textsl{non-bossiness} shows the significant difference between Theorem~\ref{thm0} with other characterizations of serial dictatorships in the literature, e.g., \cite{svensson1999} and \cite{ergin2000}.\footnote{\cite{ergin2000} uses \textsl{consistency} to characterize serial dictatorships, and \textsl{consistency} is stronger than \textsl{non-bossiness}.  }  

Second, Theorem~\ref{thm0} also provides a fresh perspective for evaluating serial dictatorships in terms of fairness. In particular, by using \textsl{globally constant tie-breaking}, our characterization also directly shows that serial dictatorships satisfy a punctual fairness notion, the \textsl{identical preferences lower bound}, which says that each agent's allotment is no worse than his allotment in the problem where all agents have an identical preference.\footnote{Formally, a mechanism $f$ satisfies the \textit{identical preferences lower bound} if for each $R\in \mathcal{R}^N$ and each $i\in N$, $f_i(R)\mathbin{R_i}f_i(R')$, where $R'_1=R'_2=\ldots=R'_n=R_i$.} Intuitively, the \textsl{identical preferences lower bound} is a solidarity property which posits that all agents should benefit from diversity of preferences.
See \citet[Section~6.1]{thomson2023} for a summary of 
\textsl{identical preferences lower bound}.\footnote{
	\textsl{Identical preferences lower bound} is one of the oldest concepts of
	fairness \citep{Steinhaus1948}.
	A partial list of papers that study \textsl{identical preferences lower bound} for allocation problem with indivisibility is \cite{moulin1990}, \cite{bevia1996}, \cite{fujinaka2007}, and \cite{hashimoto2018}. }

\begin{corollary}
	If a mechanism satisfies 
        \textsl{globally constant tie-breaking} and \textsl{strategy-proofness}, then it satisfies the \textsl{identical preferences lower bound}. \label{corollary}
\end{corollary}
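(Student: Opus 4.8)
The plan is to derive the corollary directly from Theorem~\ref{thm0}. Suppose $f$ satisfies \textsl{globally constant tie-breaking} and \textsl{strategy-proofness}. By Theorem~\ref{thm0}, $f = f^\pi$ for some priority $\pi$. So it suffices to show that every serial dictatorship satisfies the \textsl{identical preferences lower bound}. Fix a priority $\pi$, a profile $R \in \mathcal{R}^N$, and an agent $i \in N$. Let $R'$ be the profile in which every agent has preference $R_i$, i.e., $R'_1 = \dots = R'_n = R_i$. I need to show $f^\pi_i(R) \mathbin{R_i} f^\pi_i(R')$.

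The key observation is that in the identical-preferences profile $R'$, the serial dictatorship $f^\pi$ simply hands out the objects in the order of $R_i$ down the priority order $\pi$: agent $\pi(1)$ gets $\operatorname{top}_{R_i}(O)$, agent $\pi(2)$ gets the second-best object according to $R_i$, and so on. Hence if $i = \pi(k)$, then $f^\pi_i(R')$ is the $k$-th best object in $O$ according to $R_i$; equivalently, $|L(f^\pi_i(R'), R_i)| = n - k + 1$, so exactly $k-1$ objects are strictly $R_i$-preferred to $f^\pi_i(R')$. The plan for the general profile $R$ is then to show that agent $i = \pi(k)$ receives, at $f^\pi(R)$, one of his $k$ most preferred objects, which gives $f^\pi_i(R) \mathbin{R_i} f^\pi_i(R')$. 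This follows from the recursive definition: when it is agent $\pi(k)$'s turn to choose, only the $k-1$ objects already taken by $\pi(1), \dots, \pi(k-1)$ are unavailable, so agent $\pi(k)$ selects his most preferred object from a set that excludes at most $k-1$ objects; therefore his allotment is among his top $k$ objects in $O$. A short induction on $k$ makes this precise (the base case $k=1$ is immediate since $\pi(1)$ always gets $\operatorname{top}_{R_i}(O)$, which is also what he gets at $R'$).

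I do not expect any real obstacle here; the only care needed is in the bookkeeping comparing the object agent $i$ obtains under $R$ with the specific object he obtains under the flat profile $R'$, which is pinned down exactly by $i$'s rank in $\pi$. One clean way to phrase it: let $k = \pi^{-1}(i)$; then $f^\pi_i(R')$ is the unique object $o$ with $|\{o' \in O : o' \mathbin{P_i} o\}| = k - 1$, while $f^\pi_i(R)$ satisfies $|\{o' \in O : o' \mathbin{P_i} f^\pi_i(R)\}| \le k - 1$, since every object strictly preferred by $i$ to $f^\pi_i(R)$ must have been assigned to one of $\pi(1), \dots, \pi(k-1)$ before $i$'s turn. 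Combining the two displays yields $f^\pi_i(R) \mathbin{R_i} f^\pi_i(R')$, completing the proof. Alternatively, one can cite the well-known fact that serial dictatorships satisfy the \textsl{identical preferences lower bound} and simply invoke Theorem~\ref{thm0}; I would include the two-line argument above for completeness.
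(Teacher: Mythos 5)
Your proposal is correct and follows essentially the same route as the paper: invoke Theorem~\ref{thm0} to conclude $f=f^{\pi}$, pin down agent $i$'s allotment at the flat profile as his $\pi^{-1}(i)$-th best object, and note that as the $\pi^{-1}(i)$-th dictator he always receives at least that good an object at any profile. The only cosmetic difference is that the paper identifies the flat-profile allotment via \textsl{globally constant tie-breaking} while you read it off the serial-dictatorship definition, which yields the same conclusion.
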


Next, we give a characterization of all \textsl{group strategy-proof} mechanisms satisfying our local version of justified fairness, \textsl{locally constant tie-breaking}.

\begin{theorem}
	A mechanism satisfies
	\begin{itemize}
		\item \textsl{group strategy-proofness} (or \textsl{strategy-proofness} and \textsl{non-bossiness}), and
		\item \textsl{locally constant tie-breaking}
	\end{itemize}
	if and only if it is a sequential dictatorship. 	
	\label{thm1}
\end{theorem}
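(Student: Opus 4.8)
\textbf{Proof proposal for Theorem~\ref{thm1}.}

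The plan is to prove the two directions separately, with the bulk of the work in the ``only if'' direction; the ``if'' direction should be relatively routine. For the easy direction, suppose $f$ is a sequential dictatorship with selector $\sigma$. Since every sequential dictatorship is \textsl{Pareto efficient} and since the consistency requirements (C1)--(C2) make it \textsl{monotonic} (shifting up chosen allotments does not change which object any dictator picks, nor, by induction via (C2), the identity of the next dictator), the Fact tells us $f$ is \textsl{group strategy-proof}. For \textsl{locally constant tie-breaking}, I would verify formulation~(3): given a pair $\{i,j\}$ and preferences $R_{N\setminus\{i,j\}}$ of the others, run the sequential-dictatorship procedure; whichever of $i,j$ is called to pick first ``along every completion'' should be taken as $t_{\{i,j\}}(R_{N\setminus\{i,j\}})$ — but since, when $R_i=R_j$, the two preference-identical agents are interchangeable in the procedure, the order in which they are called is determined by $R_{N\setminus\{i,j\}}$ alone, and the earlier-called one gets a weakly better object from the common preference relation. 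I would need a short lemma that the set of ``already-assigned'' objects at the moment the first of $\{i,j\}$ is called depends only on $R_{N\setminus\{i,j\}}$ (again using (C1)--(C2) and the fact that dictators outside $\{i,j\}$ make identical choices regardless of $R_i=R_j$); this is the only mildly delicate point of the easy direction.

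For the ``only if'' direction, let $f$ be \textsl{group strategy-proof} and satisfy \textsl{locally constant tie-breaking}; I want to construct $\sigma:\mathcal{R}^N\to\Sigma(N)$ with $f(R)=f^{\sigma(R)}(R)$ satisfying (C1)--(C2). The natural route is to mimic the proof sketch of Theorem~\ref{thm0} but relativize everything to the ``background'' preferences of the not-yet-served agents. First, identify the first dictator: take the profile $R^*$ where everyone has the same preference $Q$; by \textsl{Pareto efficiency} (which follows from \textsl{monotonicity} for this model, or is at least a consequence of \textsl{group strategy-proofness} together with the structure, and in any case is standard) some agent $a_1$ gets $\operatorname{top}_Q(O)$. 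Using \textsl{locally constant tie-breaking} in formulation~(4) together with \textsl{monotonicity}, I would show $a_1$ gets his top object at \emph{every} profile, so $\sigma(R)(1)=a_1$ for all $R$, giving (C1). The key mechanism of the argument, exactly as in the Theorem~\ref{thm0} sketch, is: start from an all-identical profile, then change agents' preferences one at a time from the ``last'' position toward the ``first,'' and use \textsl{strategy-proofness} (to control the deviating agent's allotment) plus \textsl{non-bossiness} (to control everyone else's) to show the top object stays with $a_1$; \textsl{locally constant tie-breaking} is what lets us conclude that whenever two served-later agents are preference-identical their relative order — hence which of them is the next dictator — is pinned down by the remaining preferences.

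The inductive step is where the real content lies, and it is the step I expect to be the main obstacle. Having fixed the first $k-1$ dictators $a_1,\dots,a_{k-1}$ and their allotments along a given profile $R$ (equivalently, having fixed the set $A_{k-1}\subseteq O$ of the first $k-1$ assigned objects and verified $f$ acts as the corresponding serial dictatorship on them), I must define the $k$th dictator as a function of $(a_1,\dots,a_{k-1};A_{k-1})$ alone — this is precisely requirement (C2) — and show $f$ assigns that agent his top object in $O\setminus A_{k-1}$. The difficulty is that, unlike the serial-dictatorship case, I cannot appeal to a single global priority; instead I must run, for each fixed history, the Theorem~\ref{thm0}-style ``replace preferences one at a time'' argument \emph{within the subproblem on agents $N\setminus\{a_1,\dots,a_{k-1}\}$ and objects $O\setminus A_{k-1}$}, and carefully check that the history-dependence is consistent — i.e., that two profiles agreeing on the first $k-1$ dictators and their objects cannot disagree on the $k$th dictator. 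To get consistency I would freeze the first $k-1$ dictators at monotonic transformations that make their assigned objects top-ranked (legitimate by \textsl{monotonicity}), reducing to a genuine $(n-k+1)$-agent problem, and then invoke \textsl{locally constant tie-breaking} (formulation~(4)) to handle ties among the remaining agents and \textsl{strategy-proofness}/\textsl{non-bossiness} to propagate the ``top object'' conclusion. A subtle bookkeeping issue is that the local tie-break rule $t_{\{i,j\}}$ is indexed by \emph{all} of $R_{N\setminus\{i,j\}}$, including the now-frozen dictators' preferences, so I must check that the relevant conclusions do not secretly depend on those frozen preferences beyond what the history $(a_1,\dots,a_{k-1};A_{k-1})$ already determines; \textsl{monotonicity} is the tool that lets me replace those frozen preferences freely once the dictators' objects are top-ranked. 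Assembling these pieces yields $\sigma$ with (C1)--(C2) and $f=f^{\sigma(\cdot)}(\cdot)$, completing the proof.
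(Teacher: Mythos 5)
Your overall plan coincides with the paper's strategy: verify the axioms for sequential dictatorships directly, and for the converse run the Theorem~\ref{thm0}-style argument (start from a preference-identical profile, replace preferences one agent at a time from the last dictator toward the first, using \textsl{strategy-proofness}, \textsl{non-bossiness}/\textsl{monotonicity} and the tie-breaking axiom), then relativize the construction to the realized history to obtain the sequential structure and (C1)--(C2). So you are on the paper's route, but there is a genuine gap: the inductive step, which you yourself flag as ``the main obstacle,'' is exactly where the content of the proof lies, and your sketch does not supply the argument. In the paper this step is Lemma~\ref{lemma2}, whose induction requires three separate claims; the hardest one (Claim~\ref{claim5}, ruling out that a lower-indexed preference-identical agent receives some \emph{other} object of $Y$) cannot be closed with \textsl{strategy-proofness}, \textsl{non-bossiness} and \textsl{locally constant tie-breaking} alone as you assert, but is resolved only after first proving that \textsl{locally constant tie-breaking} together with \textsl{group strategy-proofness} implies \textsl{pairwise efficiency} (Lemma~\ref{lemma:pe}); that auxiliary lemma, and the chain of carefully chosen profile transformations in Claim~\ref{claim4} that prevents a later agent from grabbing an object of $Y$, are absent from your proposal and nothing in it substitutes for them.

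A second, concrete error: you claim \textsl{Pareto efficiency} ``follows from \textsl{monotonicity} \dots or is at least a consequence of \textsl{group strategy-proofness}.'' It does not: a constant mechanism is \textsl{group strategy-proof} (hence \textsl{monotonic}) but not \textsl{Pareto efficient} (Example~\ref{exampleLTB}). The particular fact you invoke it for --- that at the all-identical profile some agent receives $\operatorname{top}_Q(O)$ --- is trivially true because allocations are bijections onto $O$, so no efficiency axiom is needed there; but you cannot lean on Pareto efficiency anywhere else in the induction either. What the argument genuinely needs, and what must be \emph{derived} from the two stated axioms, is the weaker \textsl{pairwise efficiency}; identifying and proving that lemma is the missing idea. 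Your treatment of the easy direction (the tie-break rule $t_{\{i,j\}}$ being well defined because the history before the first member of $\{i,j\}$ is called depends only on $R_{N\setminus\{i,j\}}$) is fine and merely more detailed than the paper's one-line dismissal.
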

The proof is similar to the proof of Theorem~\ref{thm0}. The main difference is that additionally we show that the order of the agents may not be the same at each preference profile. That is, the selection of the second dictator depends on the allotment of the first dictator and, similarly, the selection of subsequent dictators relies on the allotments of their predecessors.

It is worth noting that Theorems~\ref{thm0} and~\ref{thm1} are logically independent in the sense that one cannot use one theorem to imply the other due to the different sets of properties involved.
More specifically, in Theorem~\ref{thm0}, \textsl{strategy-proofness} 
is weaker than \textsl{group strategy-proofness} in Theorem~\ref{thm1}; 
conversely, \textsl{locally constant tie-breaking} in Theorem~\ref{thm1} is weaker than \textsl{globally constant tie-breaking} in Theorem~\ref{thm0}.

By imposing the \textsl{identical preferences lower bound} in addition to the properties in Theorem~\ref{thm1}, we obtain another characterization of serial dictatorships.

\begin{corollary}\label{corollary2}
	A mechanism satisfies 
 \textsl{group strategy-proofness}, \textsl{locally constant tie-breaking}, and the \textsl{identical preferences lower bound}, if and only if it is a serial dictatorship.
\end{corollary}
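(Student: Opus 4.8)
The plan is to deduce Corollary~\ref{corollary2} from Theorem~\ref{thm1} together with a short argument showing that, among sequential dictatorships, the \textsl{identical preferences lower bound} forces the dictator order to be constant. The ``if'' direction is immediate: every serial dictatorship is a sequential dictatorship (hence \textsl{group strategy-proof} and satisfies \textsl{locally constant tie-breaking} by Theorem~\ref{thm1}), and by Corollary~\ref{corollary} (or directly, since a serial dictatorship is \textsl{strategy-proof} and \textsl{globally constant tie-breaking}) it satisfies the \textsl{identical preferences lower bound}. So the content is in the ``only if'' direction.

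For the ``only if'' direction, suppose $f$ satisfies the three listed properties. By Theorem~\ref{thm1}, $f$ is a sequential dictatorship, so there is a consistent selection $\sigma : \mathcal{R}^N \to \Sigma(N)$ with $f(R) = f^{\sigma(R)}(R)$. I want to show $\sigma$ can be taken constant, i.e.\ that there is a fixed priority $\pi$ with $f = f^\pi$. Let $R^* \in \mathcal{R}^N$ be any profile at which all agents share a common preference, say $R^*_i = \bar R$ for all $i$, and let $\pi \coloneqq \sigma(R^*)$; note $f(R^*) = f^\pi(R^*)$ assigns agent $\pi(k)$ the $k$th-best object of $\bar R$. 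The claim is that $f^{\sigma(R)} = f^\pi$ for every $R$, which I would prove by induction on the dictator rank $k$ using (C1), (C2), and the \textsl{identical preferences lower bound}. The base case is (C1): $\sigma(R)(1) = \sigma(R^*)(1) = \pi(1)$ for all $R$. For the inductive step, assume that for every profile the first $k-1$ dictators coincide with $\pi(1), \dots, \pi(k-1)$ and are assigned $f^\pi_{\pi(1)}(R), \dots, f^\pi_{\pi(k-1)}(R)$ — but these allotments already equal the top $k-1$ objects of whatever $R_{\pi(1)}, \dots, R_{\pi(k-1)}$ are, not necessarily the same across profiles, so a little care is needed here.

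The key lemma I would isolate: for a sequential dictatorship $f$, the $k$th dictator is the same agent at every profile if and only if $f$ satisfies the \textsl{identical preferences lower bound} restricted to agent $\pi(k)$ — and more concretely, I would argue directly as follows. Fix any $R$ and let $\bar R = R_{\pi(k)}$, i.e.\ let agent $\pi(k)$'s preference be the common one. Compare $f(R)$ with $f(R')$ where $R'$ is the all-$\bar R$ profile. By the \textsl{identical preferences lower bound}, $f_{\pi(k)}(R) \mathbin{R_{\pi(k)}} f_{\pi(k)}(R') = $ the $k$th-best object of $\bar R$. On the other hand, by the inductive hypothesis the first $k-1$ dictators at $R$ are $\pi(1), \dots, \pi(k-1)$, and one can use the freedom of the other agents' preferences (choosing $R_{\pi(1)}, \dots, R_{\pi(k-1)}$ appropriately, or rather showing the argument is robust to them via (C2) and a further auxiliary-profile comparison) to pin down that the set of objects consumed by the first $k-1$ dictators at $R$ can be made to be exactly the top $k-1$ objects of $\bar R$; then the $k$th dictator, being a dictator over the remaining objects, gets exactly the $k$th-best object of $\bar R$, and the lower bound inequality becomes an equality forcing the $k$th dictator to be $\pi(k)$ at that profile, after which (C2) propagates the conclusion to all profiles. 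Combining over all $k$ gives $\sigma \equiv \pi$, hence $f = f^\pi$ is a serial dictatorship.

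The main obstacle will be the bookkeeping in the inductive step: the \textsl{identical preferences lower bound} compares $f(R)$ to the all-identical profile $R'$ with $R'_i = R_{\pi(k)}$, but at $R'$ the first $k-1$ dictators consume the top $k-1$ objects of $R_{\pi(k)}$, whereas at $R$ they consume the top $k-1$ objects of their own (possibly different) preferences; reconciling these requires either a cleverly chosen sequence of intermediate profiles (changing one agent's preference at a time, invoking (C2) to keep dictator identities fixed) or an extra appeal to \textsl{strategy-proofness}/\textsl{non-bossiness} of the underlying sequential dictatorship to relocate to a profile where the first $k-1$ dictators' allotments are the top $k-1$ objects of the common preference. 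I expect this to be the only genuinely delicate point; everything else is a direct consequence of Theorem~\ref{thm1}, the definition of sequential dictatorship, and the definition of the \textsl{identical preferences lower bound}.
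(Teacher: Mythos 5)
Your overall scaffolding (easy direction immediate; for the converse invoke Theorem~\ref{thm1}, then argue by induction on the dictator rank that the \textsl{identical preferences lower bound} forces the order to be constant, using (C2) to reduce everything to the predecessors' allotment histories) is consistent with how the paper proceeds. But the inductive step as you wrote it contains a genuine gap, indeed a circularity, at the one place where all the work lies. You set $\pi \coloneqq \sigma(R^*)$ for one fixed identical-preference profile $R^*$, then for an arbitrary $R$ you compare with the all-$\bar R$ profile $R'$ where $\bar R = R_{\pi(k)}$ and assert $f_{\pi(k)}(R') = {}$the $k$th-best object of $\bar R$. That equality says precisely that $\pi(k)$ is the $k$th dictator at $R'$ --- but $R'$ is a different identical-preference profile from $R^*$ whenever $\bar R \neq \bar R^*$, the first $k-1$ dictators' allotments there are the top $k-1$ objects of $\bar R$ rather than of $\bar R^*$, and a sequential dictatorship is exactly a mechanism in which the $k$th dictator may change with those allotments. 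So the displayed equality is an instance of the statement you are trying to prove, not something available from the induction hypothesis. Worse, the \textsl{identical preferences lower bound} is vacuous when applied \emph{at} an identical-preference profile (it compares the profile with itself), so no amount of ``equality in the lower bound'' argued at $R'$ can identify the $k$th dictator there; the obstacle you flag at the end (reconciling the predecessors' allotments at $R$ versus $R'$) is not the real difficulty.

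What is missing is a cross-comparison between two \emph{different} identical-preference profiles via one-agent deviations, and this is exactly what the paper supplies. For the second dictator: suppose some $j \neq i_2$ is the second dictator when the first dictator $i_1$ receives $c \neq a$ (where $R'_{i_1}: a,b,\dots$ is the common order defining $i_2$). Applying the \textsl{identical preferences lower bound} to $i_2$ at the deviation profile $(R^*_{i_1}, R'_{-i_1})$ forces $f_{i_2}=b$ there and hence $a \mathbin{P'} b \mathbin{P'} c$; one then builds a second identical profile $R^\circ$ with common order $c,a,b,\dots$, at which $j$ receives $a$, and considers the deviation $(R'_{i_1}, R^\circ_{-i_1})$, where $i_2$ is the second dictator and takes $c$ while $i_1$ takes $a$, so $j$ can get at best $b$ --- contradicting $j$'s lower-bound guarantee relative to $R^\circ$. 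Some construction of this kind (using the lower bound twice, at carefully chosen deviation profiles tied to two identical-preference profiles) is indispensable; your sketch neither contains it nor identifies its necessity, so the proposal as it stands does not prove the corollary.
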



\subsection{Independence of properties in Theorems~\ref{thm0} and~\ref{thm1}}
\label{sec:independent}
In this subsection, we establish the independence of the properties in Theorems~\ref{thm0} and~\ref{thm1} by providing examples of different mechanisms that satisfy all but one of the stated properties. For each of the examples introduced below we indicate the sole property of Theorems~\ref{thm0} and~\ref{thm1} that it fails.


\begin{example}{\textbf{Strategy-proofness}}\ \\
    For simplicity, consider the three agent case. Let $\pi:1,2,3$ and $\pi':1,3,2$. Let $f$ be such that if $R_2=R_3$, then $f(R)=f^\pi(R)$; otherwise $f(R)=f^{\pi'}(R)$. Then $f$ satisfies \textsl{globally constant tie-breaking} (and \textsl{locally constant tie-breaking}) and \textsl{non-bossiness}, but it violates \textsl{strategy-proofness}.  \label{exampleSP}
\end{example}


\begin{example}{\textbf{Non-bossinesss}}\ \\
For simplicity, consider the three agent case. Let $\pi:1,2,3$, $\pi':1,3,2$, and let $x \in O$. Let $f$ be the mechanism such that $f(R) = f^\pi(R)$ if $R_1$ ranks object $x$ second, and $f(R) = f^{\pi'}(R)$ otherwise. Then $f$ satisfies \textsl{strategy-proofness} and \textsl{locally constant tie-breaking}, but it violates \textsl{non-bossiness}.\footnote{Note that associated hierarchy function $\sigma : \mathcal{R}^N \to \Sigma(N)$, where $\sigma(R) = \pi$ if $R_1$ ranks $x$ second and $\sigma(R) = \pi'$ otherwise, violates the consistency requirement (C2).}
    \label{exampleNB}
\end{example}

\begin{example}{\textbf{Globally/locally constant tie-breaking}}\ \\
	A constant mechanism, i.e., one that always selects the same allocation, satisfies \textsl{group strategy-proofness} but violates \textsl{locally constant tie-breaking} (and \textsl{globally constant tie-breaking}).
    \label{exampleLTB}
\end{example}

\section{Independence with other properties}
\label{sec:discussion}
We conclude the paper with a discussion of the relationships between our properties and three well-known properties, namely, the \textsl{identical preferences lower bound}, \textsl{Pareto efficiency}, and \textsl{neutrality}.

First, given Corollary~\ref{corollary}, one may wonder if there is a relation
between our justified fairness properties and the \textsl{identical preferences lower bound}. Here, we show that they are logically independent.
\medskip


\noindent\textbf{\textsl{Identical preferences lower bound}$\centernot \implies$\textsl{globally/locally constant tie-breaking}:} A constant mechanism satisfies the \textsl{identical preferences lower bound} but violates \textsl{locally constant tie-breaking} (and \textsl{globally constant tie-breaking}).

\noindent\textbf{\textsl{Globally constant tie-breaking}$\centernot \implies$\textsl{identical preferences lower bound}:} 
The mechanism of Example~\ref{exampleSP} satisfies \textsl{globally constant tie-breaking} (and \textsl{locally constant tie-breaking}) but violates the \textsl{identical preferences lower bound}.

Note that this independence also implies that we cannot replace our properties with the \textsl{identical preferences lower bound} to characterize serial dictatorships.

Second, since in the literature, efficiency notions play an important role to characterize dictatorships, one may be curious 
if there is a relation between our two properties and efficiency notions, such as \textsl{Pareto efficiency}. In the following we show they are independent.

\noindent\textbf{\textsl{Pareto efficiency}$\centernot \implies$\textsl{globally/locally constant tie-breaking}:} Suppose $n \geq 3$ and let $\pi:1,2,\dots,n$, and let $\pi':n,n-1,\dots,2,1$ be the priority that reverses the order of $\pi$. Let $f$ be the mechanism such that $f(R)=f^\pi(R)$ if $R_1 = R_2$ and $f(R)=f^{\pi'}(R)$ otherwise. Then $f$ satisfies \textsl{Pareto efficiency} but it violates \textsl{locally constant tie-breaking} (and \textsl{globally constant tie-breaking)}.

\noindent\textbf{\textsl{Globally constant tie-breaking}$\centernot \implies$\textsl{Pareto efficiency}:} Fix a priority $\pi$ and an allocation $x$. Let $f$ be the mechanism such that $f(R) = x$ if no two agents share the same preferences at $R$ (i.e., for all $i,j \in N$, $R_i \neq R_j$), and $f(R) = f^\pi(R)$ otherwise. Then $f$ satisfies \textsl{globally constant tie-breaking} (and \textsl{locally constant tie-breaking}, but it violates \textsl{Pareto efficiency}.


Finally, \cite{svensson1999} provides a characterization of serial dictatorships based on \textsl{neutrality} and \textsl{group strategy-proofness}. We now show the independence between \textsl{globally/locally constant tie-breaking} and \textsl{neutrality}.\medskip

\noindent\textbf{\textsl{Globally constant tie-breaking}$\centernot \implies$\textsl{neutrality}:} The mechanism above that satisfies \textsl{globally constant tie-breaking} but not \textsl{Pareto efficiency} also violates \textsl{neutrality}.

\noindent\textbf{\textsl{Neutrality}$\centernot \implies$\textsl{globally/locally constant tie-breaking}:} 
For simplicity consider the three agent case. Let $\pi:1,2,3$ and $\pi':1,3,2$. Let $f$ be such that if $R_1=R_2=R_3$ then $f(R)=f^\pi(R)$; otherwise $f(R)=f^{\pi'}(R)$.
By the definition of $f$, it is easy to see that $f$ satisfies \textsl{neutrality} but violates \textsl{globally constant tie-breaking} (and \textsl{locally constant tie-breaking}).\medskip

\noindent\textbf{\textsl{Globally constant tie-breaking}$\centernot \implies$\textsl{non-bossiness}:} For simplicity, consider the three agent case. Fix a priority $\pi$. For each $R_1 \in \mathcal{R}$, let $y(R_1)$ be an allocation such that for each $i \in \{1,2,3\}$, $y_i(R_1)$ is the $i$th best house at $R_1$. 
If $R$ is such that $R_2=R_3$, then $f(R) = f^\pi(R)$; otherwise $f(R) = y(R_1)$. This mechanism satisfies \textsl{globally constant tie-breaking} (and \textsl{locally constant tie-breaking}) but violates \textsl{non-bossiness}.

\noindent\textbf{\textsl{Non-bossiness}$\centernot \implies$\textsl{globally/locally constant tie-breaking}:} See Example~\ref{exampleLTB}.

\bibliographystyle{chicago}
\bibliography{references}

\appendix
\section{Omitted Proofs} \label{appendix:proof}
\subsection{Proof of Theorem~\ref{thm0}}

It is known that serial dictatorships are \textsl{weakly fair} and \textsl{group strategy-proof} \citep{svensson1994,svensson1999}. Thus, serial dictatorships satisfy \textsl{global preference-blind tie-break rule} and \textsl{strategy-proofness}. 	

To prove the uniqueness, let $f$ be a mechanism satisfying \textsl{global preference-blind tie-break rule} and \textsl{strategy-proofness}. We want to show that $f$ coincides with a serial dictatorship.

Let $R^*\in \mathcal{R}^N$ be such that $R^*_1=R^*_2=\ldots=R^*_n$.
Let $x\equiv f(R^*)$. 
Let $\pi:N\to N$ be a permutation of $N$ such that for each $i\in N$, $x_{\pi(i)}$ is the $i$-th best house at $R^*$. 

Without loss of generality, for each $i\in N$, let $\pi(i)=i$. Thus, agent $1$ receives the best house at $x$, and agent $2$ receives the second best house at $x$, and so on. In particular, we have $R^*_i:x_1,x_2,\ldots,x_n$ for each $i\in N$.

For each $k\in N$, let $\mathcal{R}^{(k)}\subsetneq \mathcal{R}^N$ be the set of all preference profiles such that for each $R^k\in \mathcal{R}^{(k)}$ and each $i<k$, $R_i=R^*_i$.

\begin{claim}
	For each $k\in N$, each $R^k\in \mathcal{R}^{(k)}$, and each $i<k$, $f_i(R^k)=x_i$. \label{claim1}
\end{claim}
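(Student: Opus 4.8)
The plan is to fix $k \in N$ and prove, by induction on $i \in \{0,1,\dots,k-1\}$, the (equivalent but more convenient) statement $P(i)$: ``for every $R \in \mathcal{R}^{(k)}$ and every $\ell \le i$, $f_\ell(R) = x_\ell$''. Taking $i = k-1$ then gives Claim~\ref{claim1}. The base case $P(0)$ is vacuous, so everything happens in the inductive step. Assume $P(i-1)$ and fix an arbitrary $R \in \mathcal{R}^{(k)}$. By $P(i-1)$, agents $1,\dots,i-1$ receive $x_1,\dots,x_{i-1}$ at $R$, so the objects still available to agents $i,\dots,n$ are exactly $\{x_i,\dots,x_n\}$, of which $x_i$ is the $R^*_i$-best object (recall $R^*_i: x_1,x_2,\dots,x_n$, and $R_i = R^*_i$ since $i < k$). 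The step reduces to showing $f_i(R) = x_i$.

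I would prove this by contradiction. If $f_i(R) \neq x_i$, then $x_i$ is assigned to some agent $j > i$, say $f_j(R) = x_i$. Consider the profile $R' := (R^*_j, R_{-j})$ obtained from $R$ by replacing agent~$j$'s report with the common preference $R^*_j$; since the preferences of agents $1,\dots,k-1$ are untouched, $R' \in \mathcal{R}^{(k)}$, so $P(i-1)$ is available at $R'$ and yields $f_i(R'),\, f_j(R') \in \{x_i,\dots,x_n\}$. Now apply \textsl{strategy-proofness} to agent~$j$ at $R'$ (whose true preference there is $R^*_j$): $f_j(R') \mathbin{R^*_j} f_j(R) = x_i$, and since $x_i$ is the $R^*_j$-maximum of the available set $\{x_i,\dots,x_n\}$, this forces $f_j(R') = x_i$. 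Next apply \textsl{globally constant tie-breaking}, in Formulation~(2), to the pair $\{i,j\}$ across the profiles $R^*$ and $R'$: agents $i$ and $j$ are preference-identical at both (each has the common preference), and at $R^*$ we have $f_i(R^*) = x_i \mathbin{P_{R^*_i}} x_j = f_j(R^*)$ because $i < j$; therefore $f_i(R') \mathbin{R^*_i} f_j(R') = x_i$, whence $f_i(R') = x_i$ as well. But then $f_i(R') = f_j(R') = x_i$ with $i \neq j$, contradicting that $f(R')$ is a bijection, and this contradiction completes the step.

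The step I expect to be the genuine obstacle is exactly this ``propagation'' of the assignment of $x_i$: because Theorem~\ref{thm0} does \emph{not} assume \textsl{non-bossiness}, one cannot argue that reverting agent~$j$'s report leaves the other allotments unchanged --- in fact $f(R) = f(R')$ is false in general. The argument above avoids tracking the rest of the allocation altogether: \textsl{strategy-proofness} is used only to pin down agent~$j$'s own object (the object $x_i$ is too desirable for him to surrender), and then \textsl{globally constant tie-breaking} transfers the contradiction to agent~$i$ in the form of an infeasibility. The remaining checks --- that $R' \in \mathcal{R}^{(k)}$ (so the inductive hypothesis applies at $R'$) and that $x_i$ is the best still-available object --- are routine.
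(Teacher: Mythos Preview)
Your proof is correct and takes a genuinely different route from the paper. The paper inducts on $k$ from $n$ down to $1$; its inductive step has three parts, the trickiest of which (Part~2) shows that no agent $j>K$ receives an object in $\{x_1,\dots,x_{K-1}\}$, and this requires an auxiliary profile $\bar R\in\mathcal R^{(K+1)}$, a two-step preference transformation, and a separate application of \textsl{globally constant tie-breaking} among the first $K-1$ agents. You instead fix $k$ and induct on $i$ from $1$ to $k-1$. Your single move---regularize the offending agent $j$'s report to $R^*_j$, use \textsl{strategy-proofness} to pin $f_j(R')=x_i$, and then invoke \textsl{globally constant tie-breaking} directly between $i$ and $j$ to force the infeasible $f_i(R')=x_i$---is shorter and avoids the auxiliary constructions entirely. (Note that your argument also covers the case $i<j<k$ uniformly: then $R_j=R^*_j$ already, so $R'=R$ and the strategy-proofness step is vacuous.) What you gain is brevity and a cleaner contradiction; the paper's decreasing-$k$ organization perhaps makes the transition to the subsequent Claim~2 feel slightly more parallel, but your scheme carries over there just as well.
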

\begin{proof}
We prove the claim by induction on $k$, starting with $k = n$ and decrementing to $k = 1$.

\smallskip
\noindent\textbf{\textit{Induction basis.}} $k=n$. 

Let $R^n\in \mathcal{R}^{(n)}$. Note that in this case,
$R^n=(R^*_1,\ldots,R^*_{n-1},R^n_n)$. 
Recall that $x_n$ is the worst house at $R^*_n$. Thus, 
by \textsl{strategy-proofness}, $f_n(R^n)=x_n$; otherwise agent $n$ has an incentive to misreport $R^n_n$ at $R^*$.
Since $R^*_1=R^n_1=R^n_2=\ldots R^n_{n-1}$, by \textsl{global preference-blind tie-break rule}, $f_1(R^n) \mathbin{R^n_1} f_2(R^n) \mathbin{R^n_2} \ldots f_{n-2}(R^n) \mathbin{R^n_{n-2}} f_{n-1}(R^n)    $. 
Thus, $f_i(R^n)=x_i$ for $i=1,\ldots,n-1$. 
\smallskip

\noindent\textbf{\textit{Induction hypothesis.}} Let $K\in \{2,\ldots,n-1\}$. If $k>K$, then for each $R^k\in \mathcal{R}^{(k)}$, and each $i<k$, $f_i(R^k)=x_i$.
\smallskip

\noindent\textbf{\textit{Induction step.}} Let $k=K$.
Let $R^K\in \mathcal{R}^{(K)}$, $R^{K+1}\equiv (R^*_K,R^K_{-K})$, and $y\equiv f(R^{K+1})$. Note that $R^{K+1} \in \mathcal{R}^{(K+1)}$. Thus, by the induction hypothesis, for $i=1,\ldots,K$, $y_i=x_i$. Let $Y=\{x_1,\ldots x_{K-1}\}\subsetneq O$.
The proof consists of three parts.

First, we show that $f_K(R^K) \not \in Y$.
By the definitions of $R^*$ and $R^K$, we know that $R^{K+1}_K=R^*_K$, and hence for each $i<K$, $(x_i=)y_i \mathbin{R^{K+1}_K} y_K(=x_K)$.   
By \textsl{strategy-proofness},	$f_K(R^K)\not\in \{y_1,\ldots,y_{K-1}\}=Y$; otherwise he has an incentive to misreport $R^K_K$ at $R^{K+1}$. Note that in this part we only use \textsl{strategy-proofness}.\medskip
	
Second, we show that for each $j>K$, $f_j(R^K) \not \in Y$. By contradiction, suppose that there is an agent $L$ with $L>K$ such that $f_L(R^K) \in Y$.
Let $\bar{R}\equiv (R^*_L,R^{K+1}_{-L})$. Note that $\bar{R} \in \mathcal{R}^{(K+1)}$. Thus, by the induction hypothesis, we see that for each $i=1,\ldots, K$, $f_i(\bar{R})=x_i=y_i$.
Let $R'\equiv (R^K_K,\bar{R}_{-K})$. Following the similar argument in the first part, we conclude that $f_K(R')\not\in Y$ due to \textsl{strategy-proofness}. Then, for each $i<K$,
since $R'_L=R^*_L=R^*_i=R'_i$, \textsl{global preference-blind tie-break rule} implies that 
\begin{itemize}
    \item[(1)] $f_i(R') \mathbin{R'_L} f_L(R')$, and
    \item[(2)] for each $j<i$, $f_j(R') \mathbin{R'_i} f_i(R')$.  
\end{itemize}
Suppose that $f_L(R')=x_\ell\in Y$. Note that $\ell<K$. Then, by (1) and (2), $f_\ell (R')\not \in \{x_1,\ldots,x_{\ell-1}\}$. Since $f_\ell (R')\neq x_\ell$, we find that $ (x_\ell=)f_L(R') \mathbin{R^*_\ell}  f_\ell (R') $, which contradicts with (1). 
Therefore, $f_L(R')\not \in Y$. Then, by \textsl{strategy-proofness}, $f_L(R')\mathbin{R'_L} f_L(R^K_L,R'_{-L})$. Since $R'_L=\bar{R}_L=R^*_L$ and $(R^K_L,R'_{-L})=R^K$, we see that $f_L(R^K_L,R'_{-L})=f_L(R^K)\not \in Y$, a desired contradiction.
Note that in this part we use both \textsl{strategy-proofness} and \textsl{global preference-blind tie-break rule}.\medskip

	\begin{tikzpicture}[scale=0.2]
\tikzstyle{every node}+=[inner sep=0pt]
\draw [black] (16.9,-11.5) circle (3);
\draw (16.9,-11.5) node {$R^K$};
\draw [black] (46,-11.5) circle (3);
\draw (46,-11.5) node {$R^{K+1}$};
\draw [black] (46,-35.3) circle (3);
\draw (46,-35.3) node {$\bar{R}$};
\draw [black] (16.9,-35.3) circle (3);
\draw (16.9,-35.3) node {$R'$};
\draw [black] (19.9,-11.5) -- (43,-11.5);
\fill [black] (43,-11.5) -- (42.2,-11) -- (42.2,-12);
\draw (31.45,-12) node [below] {$R^K_K\to R^*_K$};
\draw [black] (46,-14.5) -- (46,-32.3);
\fill [black] (46,-32.3) -- (46.5,-31.5) -- (45.5,-31.5);
\draw (45.9,-23.4) node [right] {$R^K_L=R^{K+1}_L\to R^*_L$};
\draw [black] (43,-35.3) -- (19.9,-35.3);
\fill [black] (19.9,-35.3) -- (20.7,-35.8) -- (20.7,-34.8);
\draw (31.45,-34.8) node [above] {$  R^K_K \leftarrow \bar{R}_K =R^*_K$};
\draw [black] (16.9,-32.3) -- (16.9,-14.5);
\fill [black] (16.9,-14.5) -- (16.4,-15.3) -- (17.4,-15.3);
\draw (16.4,-23.4) node [left] {$R^*_L=R'_L\to R^K_L$};
\end{tikzpicture}

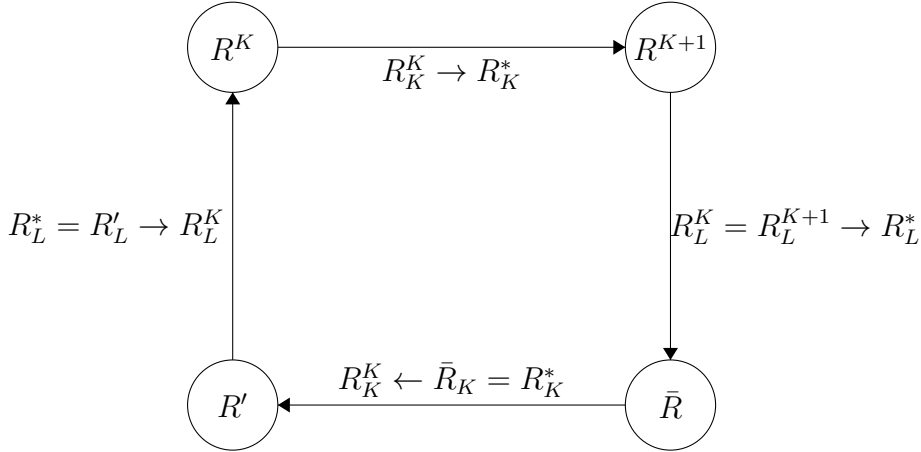
\captionof{figure}{This figure shows the preference transformations we used to prove the second part.} 
\medskip

Third, we show that for each $i<K$, $f_i(R^K)=x_i$. Note that by two parts above, we find that for each $i<K$, $f_i(R^K)\in Y$. Then, by the definitions of $R^K$, we see that for each $i,j<K$, $R^*_i=R^K_i=R^K_j=R^*_j$. By \textsl{global preference-blind tie-break rule}, $f_1(R^K) \mathbin{R^K_1} f_2(R^K) \mathbin{R^K_2} \ldots f_{K-2}(R^K) \mathbin{R^K_{K-2}}  f_{K-1}(R^K)$. Since all agents in $\{1,\ldots,K-1\}$ prefer $x_1$ to $x_2$, and prefer $x_2$ to $x_3$, and so on, we conclude that for each $i<K$, $f_i(R^K)=x_i$. Note that in this part we only use \textsl{global preference-blind tie-break rule}. 
\end{proof}

\medskip
Claim~\ref{claim1} implies that agent 1 always receives his most-preferred house when he reports $R^*_1$. Because $R^*_1$ was arbitrary, the same argument shows that agent~$1$ always receives his most-preferred object when he reports any $R_1 \in \mathcal{R}$. In other words, agent $1$ indeed is the first dictator.

Once we show that agent $1$ is always the first dictator, we can also inductively show that agent $2$ is always the second dictator with similar arguments:
Redefine $R^{*}$ as $R^{*}_2=R^{*}_3=\cdots=R^{*}_n$, and let $x \coloneqq f(R^{*}) $. By \textsl{globally constant tie-breaking}, for each $i\in \{2,\ldots,n-1\}$, we have $x_i \mathbin{R^{*}_{i+1}} x_{i+1}$. 
For each $k \in \{2,\dots, n\}$, let $\mathcal{R}^{(k)}\subsetneq \mathcal{R}^N$ be such that for each $R^k\in \mathcal{R}^{(k)}$ and each $i<k$, $R_i=R^*_i$ (possibly $R^*_1\neq R^*_2$).

\begin{claim}
	For each $k>1$, each $R^k\in \mathcal{R}^{(k)}$, and each $i<k$, $f_i(R^k)=x_i$.\label{claim2}
\end{claim}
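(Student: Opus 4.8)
The plan is to mimic the structure of the proof of Claim~\ref{claim1} exactly, but now working within the subspace of profiles where agent~$1$'s preferences are held fixed at $R^*_1$ (which need not agree with $R^*_2=\cdots=R^*_n$). Since the paper has just established that agent~$1$ is always the first dictator, we know $f_1(R)=\operatorname{top}_{R_1}(O)$ for every profile $R$; in particular, at any $R^k\in\mathcal{R}^{(k)}$ we have $f_1(R^k)=\operatorname{top}_{R^*_1}(O)$. After this first object is removed, agents $2,\dots,n$ face an allocation problem over $O\setminus\{f_1(R^*)\}$, and the induction should run on this residual problem just as before. Concretely, I would prove Claim~\ref{claim2} by downward induction on $k$, from $k=n$ to $k=2$.

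For the \textbf{induction basis} $k=n$: a profile $R^n\in\mathcal{R}^{(n)}$ has the form $(R^*_1,R^*_2,\dots,R^*_{n-1},R^n_n)$. Since $x_n$ is the worst house at $R^*_n$ among the houses available to agent~$n$ — namely those in $O\setminus\{x_1\}$, because agent~$1$'s allotment $x_1=\operatorname{top}_{R^*_1}(O)$ is fixed and unavailable — \textsl{strategy-proofness} forces $f_n(R^n)=x_n$ (else agent~$n$ profitably deviates to $R^*_n$ at $R^*$, using that $f_1$ is unchanged because agent~$1$ is a dictator). Then for agents $2,\dots,n-1$, who all report $R^*_2=\cdots=R^*_{n-1}$, \textsl{globally constant tie-breaking} chains $f_2(R^n)\mathbin{R^*_2}f_3(R^n)\mathbin{R^*_3}\cdots\mathbin{R^*_{n-2}}f_{n-1}(R^n)$, and since these houses lie in $O\setminus\{x_1,x_n\}=\{x_2,\dots,x_{n-1}\}$ (the first $n-1$ being pinned down), we get $f_i(R^n)=x_i$ for $i=2,\dots,n-1$.

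For the \textbf{induction step}, let $K\in\{3,\dots,n-1\}$ (the case analysis stops at $k=2$, which needs no argument since there is no $i<2$ with $i>1$... actually $i=1$ is handled by dictatorship, so the content is for $i\in\{2,\dots,K-1\}$). Fix $R^K\in\mathcal{R}^{(K)}$, set $R^{K+1}\equiv(R^*_K,R^K_{-K})\in\mathcal{R}^{(K+1)}$, $y\equiv f(R^{K+1})$, so $y_i=x_i$ for $i\le K$ by the hypothesis, and $Y=\{x_1,\dots,x_{K-1}\}$. The three parts go through verbatim: (i) \textsl{strategy-proofness} gives $f_K(R^K)\notin Y\setminus\{x_1\}$, and since $x_1$ is agent~$1$'s dictatorial allotment, $f_K(R^K)\notin Y$; (ii) via the same four-profile cycle ($R^K\to R^{K+1}\to\bar R\to R'\to R^K$, moving agent $L$'s preference to $R^*_L$ and back), using \textsl{strategy-proofness} together with the tie-break chain among agents in $\{2,\dots,K-1,L\}$, one rules out $f_L(R^K)\in Y$ for any $L>K$; (iii) with all of $f_2(R^K),\dots,f_{K-1}(R^K)$ forced into $Y\setminus\{x_1\}=\{x_2,\dots,x_{K-1}\}$, the tie-break chain $f_2(R^K)\mathbin{R^*_2}\cdots\mathbin{R^*_{K-2}}f_{K-1}(R^K)$ pins each $f_i(R^K)=x_i$. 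The only subtlety, and the part I expect to need the most care, is part (ii): in the cycle argument one must ensure that agent~$1$'s allotment stays equal to $x_1$ throughout all four profiles — which it does, because agent~$1$ reports $R^*_1$ everywhere in $\mathcal{R}^{(\cdot)}$ and is the first dictator — so that ``$Y$'' remains the correct set of already-assigned houses and the contradiction via $f_L(R')\mathbin{R^*_\ell}f_\ell(R')$ still bites. Once Claim~\ref{claim2} is proved, the same bootstrapping remark as after Claim~\ref{claim1} shows agent~$2$ is the second dictator (for every $R_1$, not just $R^*_1$), and iterating the whole scheme for agents $3,4,\dots,n$ yields that $f$ is the serial dictatorship $f^\pi$.
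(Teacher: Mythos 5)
Your proposal is correct and is essentially the paper's intended argument: the paper omits the proof of Claim~\ref{claim2} precisely because it is the proof of Claim~\ref{claim1} rerun one level down, and you reconstruct exactly that downward induction, correctly identifying the only modifications needed — namely that agent~$1$'s allotment is pinned at $x_1$ by his already-established first-dictatorship (which replaces his role in the tie-break chain and rules out $x_1$ as an allotment for anyone else), while the \textsl{globally constant tie-breaking} chains are applied only among the agents sharing the common preference $R^*_2=\cdots=R^*_n$. No gap to report.
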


The proof of Claim~\ref{claim2} is similar to the proof of Claim~\ref{claim1} and hence we omit it. Moreover, Claim~\ref{claim2} implies that for each $R\in \mathcal{R}^N$, agent $2$ always receives his most-preferred house among $O\setminus \{x_1\}$ at $f(R)$. That is,  agent $2$ is the second dictator. Inductively, we can also show that agent $3$ is the third dictator and so on. Thus, the proof of Theorem~\ref{thm0} is completed. \qed

\subsection{Proof of Corollary~\ref{corollary}}
If $f$ satisfies the stated properties, then $f$ is the serial dictatorship associated with some priority $\pi$. Without loss of generality, let $\pi(i)=i$ for each $i\in N$. Let $R\in \mathcal{R}^N$ and $i\in N$.

Let $R^i \in \mathcal{R}^N$ be such that for each $j\in N$, $R^i_j=R_i$, and let $x \coloneqq f(R^i)$. Let $o\in O$ be the $i$th-best object at $R_i$. 
By \textsl{globally constant tie-breaking}, $x_1 \mathbin{R_1^i} x_2 \cdots x_{i-1} \mathbin{R_{i-1}^i} x_i \mathbin{R_i^i} x_{i+1} \cdots x_{n-1} \mathbin{R_{n-1}^i} x_n $. Thus, $x_i=o$.

Since $f = f^\pi$, by the definition of serial dictatorships, we know that agent $i$, as the $i$th dictator, can receive a house that is at least as good as his $i$th best house $o$. \qed

\subsection{Proof of Theorem~\ref{thm1}}

We first state and prove a key lemma.

\begin{lemma}
If a mechanism satisfies \textsl{locally constant tie-breaking} and \textsl{group strategy-proofness}, then it satisfies \textsl{pairwise efficiency}. \label{lemma:pe}
\end{lemma}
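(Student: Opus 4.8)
The plan is to prove the contrapositive: if $f$ is \textsl{group strategy-proof} but not \textsl{pairwise efficient}, then $f$ violates \textsl{locally constant tie-breaking}. So suppose there is a profile $R$ and a pair of agents $\{i,j\}$ who strictly envy each other at $f(R)$, i.e., $f_j(R) \mathbin{P_i} f_i(R)$ and $f_i(R) \mathbin{P_j} f_j(R)$. The idea is to construct, starting from $R$, a new profile at which $i$ and $j$ are preference-identical, track how $f$ behaves along the way using \textsl{group strategy-proofness} (equivalently \textsl{monotonicity} and \textsl{non-bossiness}), and arrive at a contradiction with condition~(4) of \textsl{locally constant tie-breaking}.

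First I would use \textsl{group strategy-proofness}: since the coalition $\{i,j\}$ could potentially swap $f_i(R)$ and $f_j(R)$, \textsl{pairwise efficiency} is exactly what rules this out, but here I am assuming it fails, so the swap is a profitable joint deviation in the ``willingness'' sense — the subtlety is that the deviation must be implemented via reported preferences, not a direct reallocation. The cleaner route is via \textsl{monotonicity}: perform a monotonic transformation of $R$ at $f(R)$ for all agents other than $i$ and $j$, and for $i$ and $j$ push their respective allotments up as far as the constraints allow, so that we may take the other agents' preferences to be essentially ``frozen'' and focus on a two-agent subproblem governed by $R_{N \setminus \{i,j\}}$. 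By \textsl{monotonicity}, $f$ is unchanged. Then I would consider two auxiliary profiles obtained by moving $i$'s and $j$'s preferences: one where both $i$ and $j$ report a common preference $R^\circ$ that ranks $f_i(R)$ above $f_j(R)$ (say $R^\circ$ agrees with $R_j$ on the relevant top objects), and one where both report a common $R^\bullet$ ranking $f_j(R)$ above $f_i(R)$. Using \textsl{strategy-proofness} repeatedly (changing one agent at a time, invoking \textsl{non-bossiness} to keep the rest of the allocation fixed), I would show that at the first profile agent~$i$ gets the better of the two objects and at the second profile agent~$j$ gets the better one — intuitively because the mutual-envy structure means each agent, when free to report, can secure the object the other is holding.

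The contradiction then comes from condition~(4): at the profile $R$ itself $i$ and $j$ need not be preference-identical, so I would instead set this up with two profiles $R'$ and $R''$ that both have $i$ and $j$ preference-identical (using $R^\circ$ and $R^\bullet$ respectively) but share the same $R_{N\setminus\{i,j\}}$. Condition~(4) with $R_i = R_j$ replaced appropriately forces the tie-break between $i$ and $j$ to be consistent across $R'$ and $R''$: either $f_i \mathbin{R_i} f_j$ at both, or at neither. But the construction yields $f_i(R') \mathbin{P_i} f_j(R')$ and $f_j(R'') \mathbin{P_j} f_i(R'')$, i.e.\ the tie-break favors $i$ at $R'$ and $j$ at $R''$, a contradiction. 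Hence no such mutually-envious pair exists and $f$ is \textsl{pairwise efficient}.

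The main obstacle I anticipate is the bookkeeping in the middle step: carefully choosing the monotonic transformations and the common preferences $R^\circ, R^\bullet$ so that (a) the other agents' allotments really are pinned down by \textsl{non-bossiness}/\textsl{monotonicity} throughout, and (b) each of $i$ and $j$ can be shown, via \textsl{strategy-proofness}, to obtain a specific one of the two contested objects rather than some third object that has been freed up by the preference changes. Ensuring that no third object interferes — i.e., that the two-agent ``subeconomy'' on $\{f_i(R), f_j(R)\}$ is genuinely closed — is the delicate part; I would handle it by making the monotonic transformations for agents in $N \setminus \{i,j\}$ keep each such agent's allotment strictly at the top, and by choosing $R^\circ$ and $R^\bullet$ to rank $f_i(R)$ and $f_j(R)$ as the top two objects in both, so that the only objects in play for $i$ and $j$ are exactly those two.
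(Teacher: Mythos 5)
Your proposal is correct and is essentially the paper's own argument: the paper also fixes $R_{N\setminus\{i,j\}}$, uses \textsl{group strategy-proofness} (in its \textsl{monotonicity} formulation, rather than your one-agent-at-a-time \textsl{strategy-proofness} plus \textsl{non-bossiness} steps) to keep the allocation at $f(R)$ while moving $i$ and $j$ to exactly your two common preferences that top-rank the two contested objects in opposite orders, and then contradicts condition~(4) of \textsl{locally constant tie-breaking} because the tie-break favors $i$ at one profile and $j$ at the other. Your anticipated ``top-two'' construction for $R^\circ,R^\bullet$ is precisely how the paper closes the subeconomy, so the only differences are bookkeeping.
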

\begin{proof}
    Let $f$ be a mechanism that satisfies \textsl{locally constant tie-breaking} and \textsl{group strategy-proofness}. Note that $f$ also satisfies \textsl{monotonicity}.
By contradiction, suppose that $f$ is not \textsl{pairwise efficient}. Then, there is a profile $R\in \mathcal{R}^N$ and a pair of agents $\{i,j\}\subseteq N$ such that $f_i(R) \mathbin{P_j} f_j(R)$ and $f_j(R) \mathbin{P_i} f_i(R)$. Let $x \coloneqq f(R)$. Without loss of generality, let $i=1,j=2,x_1=o_2$, and $x_2=o_1$. Let us consider the following preference list $(\hat{R}_1,\hat{R}_2)$:

$$\hat{R}_1:o_1,o_2,o_3,\ldots,o_n;$$
$$\hat{R}_2:o_2,o_1,o_3,\ldots,o_n.$$

Note that (i) the only difference between $\hat{R}_1$ and $\hat{R}_2$ is the positions of $o_1$ and $o_2$; and (ii) $(\hat{R}_1,\hat{R}_2)$ is a monotonic transformation of $(R_1,R_2)$ at $(x_1(=o_2),x_2(=o_1))$. Thus, by \textsl{monotonicity}, $f(\hat{R}_1,\hat{R}_2,R_{N\setminus\{1,2\}})=x$.

Next, consider $\bar{R}_1=\hat{R}_2$. Since $\bar{R}_1$ is a monotonic transformation of $\hat{R}_1$ at $o_2(=x_1)$, by \textsl{monotonicity}, $f(\bar{R}_1,\hat{R}_2,R_{N\setminus\{1,2\}})=x$. Now, since $\bar{R}_1=\hat{R}_2$ and $o_2=x_1=f_1(\bar{R}_1,\hat{R}_2,R_{N\setminus\{1,2\}}) \mathbin{ \bar{R}_1  }f_2(\bar{R}_1,\hat{R}_2,R_{N\setminus\{1,2\}})=x_2=o_1 $, \textsl{locally constant tie-breaking} implies that for 
\begin{equation}
\text{For any }R'_1,R'_2 \text{ with }R'_1=R'_2, f_1(R'_1,R'_2 ,R_{N\setminus\{1,2\}}) \mathbin{ R'_1  }f_2(R'_1,R'_2 ,R_{N\setminus\{1,2\}}). \label{pairequ}
\end{equation}

Next, consider $\bar{R}_2=\hat{R}_1$. Since $\bar{R}_2$ is a monotonic transformation of $\hat{R}_2$ at $o_1(=x_2)$, by \textsl{monotonicity}, $f(\hat{R}_1,\bar{R}_2,R_{N\setminus\{1,2\}})=x$. We see that $\bar{R}_2=\hat{R}_1$ and $o_1=x_2=f_2(\hat{R}_1,\bar{R}_2,R_{N\setminus\{1,2\}})\mathbin{\hat{P}_1 }f_1(\hat{R}_1,\bar{R}_2,R_{N\setminus\{1,2\}})=x_1=o_2$, which contradicts (\ref{pairequ}).
\end{proof}

\medskip
We are now ready to prove the theorem. It is clear that sequential dictatorships satisfy the properties, so it suffices to prove the uniqueness.

Let $f$ be a mechanism satisfying \textsl{locally constant tie-breaking} and \textsl{group strategy-proofness}. Thus, $f$ is also \textsl{monotonic}, \textsl{non-bossy}, and \textsl{pairwise efficient}.

The following notation is the same as in the proof of Theorem~\ref{thm0}.
Let $R^*\in \mathcal{R}^N$ be such that $R^*_1=R^*_2=\ldots=R^*_n$.
Let $x\equiv f(R^*)$. 
Let $\pi:N\to N$ be a permutation of $N$ such that for each $i\in N$, $x_{\pi(i)}$ is the $i$-th best house at $R^*$. Thus,
for each $i,j\in N$,
if $\pi(i)<\pi(j)$ then $x_i \mathbin{R^*_i} x_j$. 
Without loss of generality, for each $i\in N$, let $\pi(i)=i$. Thus, agent $1$ receives the best house at $x$, and agent $2$ receives the second best house at $x$, and so on. In particular, we have $R^*_i:x_1,x_2,\ldots,x_n$ for each $i\in N$.
For each $k\in N$, let $\mathcal{R}^{(k)}\subsetneq \mathcal{R}^N$ be such that for each $R^k\in \mathcal{R}^{(k)}$ and each $i<k$, $R_i=R^*_i$.
\medskip

As in the proof of Theorem~\ref{thm0}, we first show that agent $1$ is always the first dictator by the following lemma.

\begin{lemma}
For each $k\in N$, each $R^k\in \mathcal{R}^{(k)}$, and each $i<k$, $f_i(R^k)=x_i$. \label{lemma2}
\end{lemma}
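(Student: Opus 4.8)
The plan is to prove Lemma~\ref{lemma2} by downward induction on $k$, exactly mirroring the structure of Claim~\ref{claim1} from the proof of Theorem~\ref{thm0}, but replacing every appeal to \textsl{globally constant tie-breaking} with an appeal to \textsl{locally constant tie-breaking}, and exploiting the fact that $f$ is now \textsl{monotonic}, \textsl{non-bossy}, and \textsl{pairwise efficient} (by Lemma~\ref{lemma:pe}). The crucial observation is that in Claim~\ref{claim1}, \textsl{globally constant tie-breaking} was only ever applied to a collection of agents who \emph{all} share the common preference $R^*_1$; whenever we compare two such agents $i,j$ with $R_i = R_j = R^*_1$ across two profiles that agree off $\{i,j\}$, formulations (3)--(4) of \textsl{locally constant tie-breaking} give us the same conclusion as \textsl{globally constant tie-breaking} would have. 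So the argument goes through essentially verbatim once we check that the relevant profile transformations keep the preferences of agents $\{1,\dots,K-1\}$ (and, in the second part, agent $L$) fixed at $R^*_1$ off the pair being compared.

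\textbf{Key steps.} First, I would set up the induction: the base case $k = n$ uses \textsl{strategy-proofness} (a consequence of \textsl{group strategy-proofness}) to force $f_n(R^n) = x_n$ since $x_n$ is worst at $R^*_n$, and then \textsl{locally constant tie-breaking} applied successively to the pairs $\{i, i+1\}$ for $i = n-2, \dots, 1$ (all of whom report $R^*_1$) to conclude $f_i(R^n) = x_i$ for $i < n$; here the tie-break between $i$ and $i+1$ is pinned down by comparing to the profile $R^*$ itself, where we know $x_i \mathbin{R^*_i} x_{i+1}$. Second, for the induction step at $k = K$, I would reproduce the three-part structure: (i) $f_K(R^K) \notin Y \coloneqq \{x_1,\dots,x_{K-1}\}$ by \textsl{strategy-proofness} applied to the transformation $R^K \leadsto R^{K+1} = (R^*_K, R^K_{-K})$ together with the induction hypothesis $f_i(R^{K+1}) = x_i$ for $i \le K$; (ii) $f_j(R^K) \notin Y$ for all $j > K$, via the same diamond of transformations depicted in the figure (changing agent $L$'s preference to $R^*_L = R^*_1$ and back, changing agent $K$'s between $R^K_K$ and $R^*_K$), using \textsl{locally constant tie-breaking} on pairs of agents in $\{1,\dots,K-1,L\}$ — all reporting $R^*_1$ in the relevant profiles — plus \textsl{strategy-proofness} for agent $L$; (iii) $f_i(R^K) = x_i$ for all $i < K$, which now follows from parts (i)--(ii) (the values $f_i(R^K)$, $i < K$, must fill out $Y$) together with \textsl{locally constant tie-breaking} on the chain of pairs $\{1,2\},\{2,3\},\dots,\{K-2,K-1\}$ comparing against $R^*$.

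\textbf{Main obstacle.} The main thing to verify carefully is that \textsl{locally constant tie-breaking} is strong enough to substitute for \textsl{globally constant tie-breaking} at every point where the latter was invoked in Claim~\ref{claim1}. The subtlety is that formulation (4) only lets us transport the order between $i$ and $j$ across profiles that agree on $R_{N \setminus \{i,j\}}$, whereas (2) allowed arbitrary changes off $\{i,j\}$. In part (ii) we change agent $L$'s preferences as well as agent $K$'s, so I need to chain the tie-break comparisons correctly: the relevant application of (4) compares $R'$ with $\bar{R}$ (which differ only in agent $K$'s preference, off the pair $\{i,L\}$ for $i < K$, $i \neq K$) and separately uses the induction hypothesis at $\bar{R} \in \mathcal{R}^{(K+1)}$ to know $f_i(\bar R) = x_i$ there. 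Once one confirms that each comparison is between profiles agreeing off the relevant two-agent set, the rest is routine and I would simply remark that the proof is identical to that of Claim~\ref{claim1} with \textsl{locally constant tie-breaking} in place of \textsl{globally constant tie-breaking}. After Lemma~\ref{lemma2}, the conclusion that agent~$1$ is always the first dictator follows as before (since $R^*_1$ was arbitrary), and then one would peel off agent~$1$, restrict attention to $O \setminus \{x_1\}$ and the remaining agents, and iterate — the key new feature relative to Theorem~\ref{thm0} being that the identity of the second dictator is allowed to depend on $x_1$, which is exactly what distinguishes sequential from serial dictatorships and is consistent with consistency requirements (C1)--(C2).
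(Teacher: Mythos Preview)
Your proposal has a genuine gap: the substitution of \textsl{locally constant tie-breaking} for \textsl{globally constant tie-breaking} does not go through at the places you indicate, because the profiles you want to compare fail to agree on $N\setminus\{i,j\}$.

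Concretely, consider your part (iii). You want to order $f_1(R^K),\dots,f_{K-1}(R^K)$ by ``comparing against $R^*$'' on pairs $\{i,i+1\}$ with $i+1<K$. But $R^K$ and $R^*$ differ at agents $K,K+1,\dots,n$, all of which lie in $N\setminus\{i,i+1\}$; hence formulation~(4) gives you nothing. The same problem already appears in your base case (where $R^n$ and $R^*$ differ at agent~$n\in N\setminus\{i,i+1\}$) and in your part (ii): you compare $R'$ with $\bar R$ for the pair $\{i,L\}$, noting they ``differ only in agent~$K$'s preference, off the pair $\{i,L\}$'' --- but that is precisely the wrong direction; formulation~(4) requires the two profiles to \emph{agree} off the pair, so a change at $K\notin\{i,L\}$ blocks the inference. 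You correctly identify this as the main obstacle, but the resolution you sketch does not clear it.

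The paper's proof of Lemma~\ref{lemma2} is accordingly quite different from Claim~\ref{claim1}. The base case is handled by \textsl{non-bossiness}: $f_n(R^n)=x_n$ immediately forces $f(R^n)=f(R^*)$. For part (ii) (the analogue of showing $f_j(R^K)\notin Y$ for $j>K$), the paper does \emph{not} compare agents inside $\{1,\dots,K-1,L\}$; instead it constructs auxiliary preferences so that agents $K$ and $L$ themselves become preference-identical, and applies \textsl{locally constant tie-breaking} to the pair $\{K,L\}$ across two profiles that genuinely agree on $N\setminus\{K,L\}$. For part (iii), the paper abandons tie-breaking altogether and instead derives a contradiction with \textsl{pairwise efficiency} (Lemma~\ref{lemma:pe}), after a sequence of preference changes using \textsl{monotonicity} and \textsl{non-bossiness}. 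These extra tools --- non-bossiness, monotonicity, and pairwise efficiency --- are not incidental; they are what compensate for the weakness of the local property, and your plan of ``Claim~\ref{claim1} verbatim with local in place of global'' does not use them in any essential way.
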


\begin{proof}
	We prove the lemma by induction on $k$, starting with $k=n$ and decrementing to $k=1$.

 \smallskip
	
	\noindent\textbf{\textit{Induction basis.}} $k=n$. 
	
	Let $R^n\in \mathcal{R}^{(n)}$. Note that in this case,
	$R^n=(R^*_1,\ldots,R^*_{n-1},R^n_n)$. 
	Recall that $x_n$ is the worst house at $R^*_n$. Thus, 
	by \textsl{strategy-proofness}, $f_n(R^n)=x_n$; otherwise agent $n$ has an incentive to misreport $R^n_n$ at $R^*$.
	By \textsl{non-bossiness}, $f_n(R^n)=x_n$ implies that $f(R^n)=f(R^*)$.\smallskip

	\noindent\textbf{\textit{Induction hypothesis.}} Let $K\in \{2,\ldots,n-1\}$. If $k>K$, then for each $R^k\in \mathcal{R}^{(k)}$, and each $i<k$, $f_i(R^k)=x_i$.
	\smallskip

	\noindent\textbf{\textit{Induction step.}} Let $k=K$.
	Let $R^K\in \mathcal{R}^{(K)}$, $R^{K+1}\equiv (R^*_K,R^K_{-K})$, and $y\equiv f(R^{K+1})$. Note that $R^{K+1} \in \mathcal{R}^{(K+1)}$. Thus, by the induction hypothesis, for $i=1,\ldots,K$, $y_i=x_i$. Let $Y=\{x_1,\ldots x_{K-1}\}\subsetneq O$.
	The proof consists of three claims.
	
	\begin{claim}
	$f_K(R^K) \not \in Y$. \label{claim3}
	\end{claim}

\begin{proof}
By the argument we used in the proof of Claim~\ref{claim1}, \textsl{strategy-proofness} implies $f_K(R^K)~\notin~Y$. 
\end{proof}
	 If $f_K(R^K)=y_K$, then, by \textsl{non-bossiness}, $f(R^K)=y$. Thus, in the following we assume that $f_K(R^K)\neq y_K$.\medskip

	 \begin{claim}
	 	For each $j>K$, $f_j(R^K) \not\in Y$. \label{claim4}
	 \end{claim}
 \begin{proof}
By contradiction, suppose that there is an agent $L$ with $L>K$ such that $f_L(R^K)=x_\ell \in Y$. Note that $\ell<K$. So, $x_\ell \mathbin{R^*_K} x_K$. Let 
	$$R'_K:x_\ell, f_K(R^K),\ldots,\text{ and } R'\equiv (R'_K,R^K_{-K}).$$
	By \textsl{strategy-proofness}, agent $K$ cannot receives $x_\ell$ at $f(R')$; otherwise he has an incentive to misreport $R'_K$ at $R^K$. Again, by \textsl{strategy-proofness} we have $f_K(R^K)=f_K(R')$; otherwise he has an incentive to misreport $R^K_K$ at $R'$.
	By \textsl{non-bossiness}, $f(R^K)=f(R')$, and in particular, $f_L(R')=x_\ell$. Let $$R''_L=R'_K \text{ and }R''\equiv (R''_L,R'_{-L}).$$
	
	It is easy to see that $R''$ is a monotonic transformation of $R'$ at $f(R')$. Thus, $f(R'')=f(R')$, in particular, $f_K(R'')=f_K(R^K)$ and $f_L(R'')=x_\ell$. Since $R''_L=R''_K=R'_K$ and $x_\ell\mathbin{R'_K} f_K(R^K) $, by \textsl{local preference-blind tie-break rule}, we have
	\begin{equation}
	\text{For each }\bar{R}_K=\bar{R}_L, f_L(\bar{R}_K,\bar{R}_L,R''_{N\setminus\{L,K\}}) \mathbin{\bar{R}_L} f_K(\bar{R}_K,\bar{R}_L,R''_{N\setminus\{L,K\}}). \label{equthm}
	\end{equation}
	
	Now, consider $(R^*_K,R^*_L,R^K_{N\setminus\{L,K\} })$. Note that $(R^*_K,R^*_L,R^K_{N\setminus\{L,K\} })\in \mathcal{R}^{(K+1)}$. Thus, by the induction hypothesis, for $i=1,\ldots,K$, $y_i=x_i$. Hence, $f_L(R^*_K,R^*_L,R^K_{N\setminus\{L,K\} })\not \in  \{x_1,\ldots,x_K\}$. By the definition of $R^*$, we see that $x_K=f_K(R^*_K,R^*_L,R^K_{N\setminus\{L,K\} }) \mathbin{R^*_L} f_L(R^*_K,R^*_L,R^K_{N\setminus\{L,K\} }) $, which contradicts with (\ref{equthm}). 
	 \end{proof}
 
 \begin{claim}
 For each $i<K$, $f_i(R^K) =x_i$. \label{claim5}
 \end{claim}
\begin{proof}
	 By contradiction, suppose that there is an agent $j \in \{1,\ldots,K-1\}$ such that $f_j(R^K)\neq x_j$. By Claims~\ref{claim3} and~\ref{claim4}, we conclude that $f_{ \{1,\ldots,K-1\}  }=Y$. Thus, it is without loss of generality to assume that $f_j(R^K)= x_\ell$, where $\ell \in \{j+1,\ldots,K-1\} $. Since $R^K_j=R^*_j$ and $j<\ell$, $x_j\mathbin{P^K_j} x_\ell$.
	 Let 
	 $$R'_K:x_1,\ldots,x_j,x_\ell,f_K(R^K),x_K,\ldots$$
	 
	 Claim~\ref{claim3} and \textsl{strategy-proofness} implies that $f_K(R'_K,R^K_{-K})=f_K(R^K)$. By \textsl{non-bossiness}, $f(R'_K,R^K_{-K})=f(R^K)$. Let
	 $$R'_j:x_j,f_K(R^K),x_\ell,\ldots $$
	 
	 It is easy to see that $R'_j$ is a monotonic transformation of $R^K_j(=R^*_j)$ at $x_j(=f_j(R^{K+1}))$. Thus, $f(R'_j,R^{K+1}_{-j})=f(R^{K+1})$. In particular,
	 \begin{equation}
	  f_K(R'_j,R^{K+1}_{-j})=x_K \text{ and }f_j(R'_j,R^{K+1}_{-j})=x_j. \label{equclaim5}
	 \end{equation}
	 Let $$R'\equiv (R'_j,R'_K,R^K_{N\setminus\{j,K\}})= (R'_j,R'_K,R^{K+1}_{N\setminus\{j,K\}}).$$

	 Recall that $f(R'_K,R^K_{-K})=f(R^K)$, so $f_j(R'_K,R^K_{-K})=x_\ell$. By \textsl{strategy-proofness}, $f_j(R')\neq x_j$; otherwise he has an incentive to misreport $R'_j$ at $(R'_K,R^K_{-K})$. Again, by \textsl{strategy-proofness}, $f_j(R')\in \{f_K(R^K),x_\ell\}$; otherwise he has an incentive to misreport $R^K_j$ at $R'$.\medskip
	 
	 \textbf{Case~one}. $f_j(R')=f_K(R^K)$. Thus, $f_K(R')\neq f_K(R^K)$.
	 Let $$\bar{R}_K:x_1,\ldots,x_j,x_\ell,x_K,\ldots$$ 
	 By \textsl{monotonicity}, $f(R')=f(\bar{R}_K,R'_{-K})$.
	 Recall that $f_K(R'_j,R^{K+1}_{-j})=x_K$ (see (\ref{equclaim5})) and $R^{K+1}_K=R^*_K:x_1,\ldots,x_n$
	 Thus, $\bar{R}_K$ is a monotonic transformation of $R^{K+1}_K$ at $x_K(=f_K(R'_j,R^{K+1}_{-j}))$. By \textsl{monotonicity}, $f(\bar{R}_K,R'_{-K})=f(R'_j,R^{K+1}_{-j})$. 
 Overall, we have $f(R')=f(\bar{R}_K,R'_{-K})=f(R'_j,R^{K+1}_{-j})$. 
  However, it means that $f_j(R')=f_j(R'_j,R^{K+1}_{-j})=x_j\neq f_K(R^K) $, a desired contradiction.
	 
	  \textbf{Case~two}. $f_j(R')=x_\ell$. Recall that $f(R'_K,R^K_{-K})=f(R^K)$ and $f_j(R^K)=x_\ell=f_j(R')$. By \textsl{non-bossiness}, $f(R')=f(R^K)$. So, $f_K(R')=f_K(R^K)$. However, we see that
	 $f_K(R') \mathbin{P'_j} f_j(R')$ and $f_j(R') \mathbin{P'_K} f_K(R')$, which contradicts with \textsl{pairwise efficiency}.
\end{proof}

\begin{tikzpicture}[scale=0.2]
\tikzstyle{every node}+=[inner sep=0pt]
\draw [black] (13.8,-8.9) circle (3);
\draw (13.8,-8.9) node {$R^K$};
\draw [black] (64.1,-8.9) circle (3);
\draw (64.1,-8.9) node {$R^{K+1}$};
\draw [black] (64.9,-35.9) circle (6);
\draw (64.9,-35.9) node {$(R'_j,R^{K+1}_{-j})$};
\draw [black] (37.2,-35.9) circle (3);
\draw (37.2,-35.9) node {$R'$};
\draw [black] (13.8,-35.9) circle (6);
\draw (13.8,-35.9) node {$(R'_K,R^K_{-K})$};
\draw [black] (16.8,-8.9) -- (61.1,-8.9);
\fill [black] (61.1,-8.9) -- (60.3,-8.4) -- (60.3,-9.4);
\draw (38.95,-9.4) node [below] {$R^K_K\to\mbox{ }R^*_K$};
\draw [black] (64.19,-11.9) -- (64.81,-29.9);
\fill [black] (64.81,-29.9) -- (65.29,-29.09) -- (64.29,-29.12);
\draw (63.96,-22.41) node [left] {$R^{K+1}_j\to\mbox{ }R'_j$};
\draw [black] (58.9,-35.9) -- (40.2,-35.9);
\fill [black] (40.2,-35.9) -- (41,-36.4) -- (41,-35.4);
\draw (51.05,-36.3) node [below] {$R'_K\leftarrow\mbox{ }R^{K+1}_K$};
\draw [black] (13.8,-11.9) -- (13.8,-29.9);
\fill [black] (13.8,-29.9) -- (14.3,-29.1) -- (13.3,-29.1);
\draw (13.3,-22.4) node [left] {$R^K_K\to\mbox{ }R'_K$};
\draw [black] (19.8,-35.9) -- (34.2,-35.9);
\fill [black] (34.2,-35.9) -- (33.4,-35.4) -- (33.4,-36.4);
\draw (25.5,-36.4) node [below] {$R^K_j\to\mbox{ }R'_j$};
\end{tikzpicture}

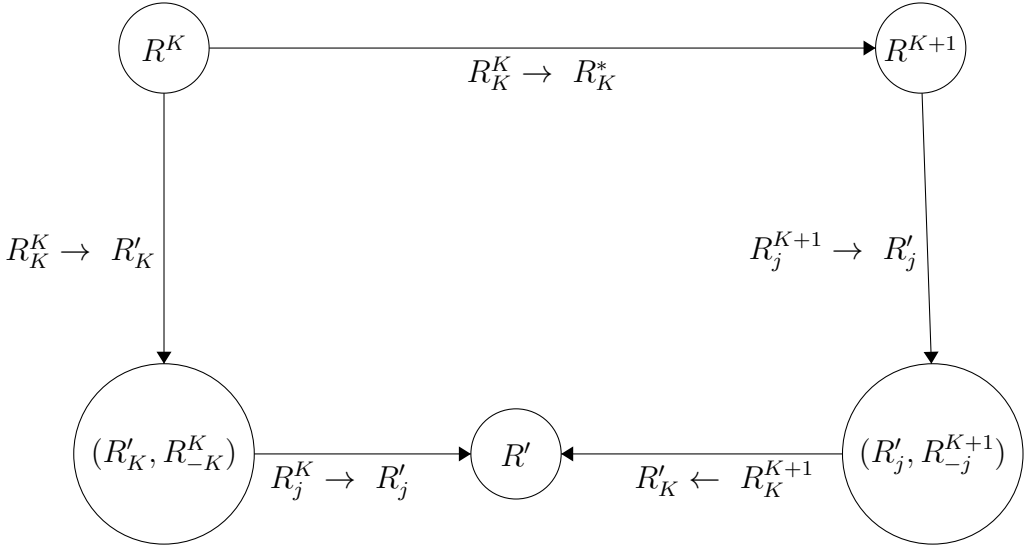
\captionof{figure}{This figure shows the preference transformations we used to prove Claim~\ref{claim5}.} 
\medskip

The proof of Lemma~\ref{lemma2}	thus is completed by the induction above.
\end{proof}

Similar to Claim~\ref{claim1}, Lemma~\ref{lemma2} implies that agent $1$ is always the first dictator as $R^*_1$ is arbitrary.

Once we show that agent $1$ is always the first dictator, we can also inductively show that there always exists a second dictator in a similar way:
Let $R_1\in \mathcal{R}$. 
Redefine $R^{*}$ as $R^*_1=R_1,R^{*}_2=R^{*}_3=\ldots,R^{*}_n$, and $x\equiv f(R^{*}) $. Redefine $\pi:\{1,\ldots,n-1\} \to N\setminus\{1\}$ as a bijection such that for each $i \in \{1,\ldots,n-1\}$, $\pi(i)$ receives the $i$-th best house among $O\setminus \{x_1\}$.
For each $k\in \{1,\ldots,n-1\}$, redefine $\mathcal{R}^{(k)}\subsetneq \mathcal{R}^N$ be such that (1) $R^k_1=R^*_1=R_1$, and (2) for each $R^k\in \mathcal{R}^{(k)}$ and each $i<k$, $R_{\pi(i)}=R^*_{\pi(i)}$.

\begin{lemma}
	For each $k\in N$, each $R^k\in \mathcal{R}^{(k)}$, and each $i<k$, $f_{\pi(i)}(R^k)=x_{\pi(i)}$. \label{lemma3}
\end{lemma}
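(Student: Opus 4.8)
The plan is to mimic the structure of the proof of Lemma~\ref{lemma2} almost verbatim, with the roles of agents relabeled through the bijection $\pi : \{1,\dots,n-1\} \to N \setminus \{1\}$ and with agent~$1$'s preference now fixed at the arbitrary $R_1$ rather than varying. The key observation that makes this reduction work is that, by Lemma~\ref{lemma2} (and the paragraph following it), agent~$1$ is the first dictator: at \emph{every} profile $R^k \in \mathcal{R}^{(k)}$ we have $f_1(R^k) = \operatorname{top}_{R_1}(O) = x_1$, so agent~$1$'s allotment is frozen and the remaining agents $N \setminus \{1\}$ are effectively dividing $O \setminus \{x_1\}$ among themselves. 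First I would restate, in this relativized setting, the three structural facts inherited from \textsl{group strategy-proofness}: $f$ is \textsl{monotonic}, \textsl{non-bossy}, and (by Lemma~\ref{lemma:pe}) \textsl{pairwise efficient}; all three continue to hold once we restrict attention to profiles in which $R_1$ is held fixed, since they are properties of $f$ itself.

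The induction is again on $k$, decrementing from $k = n-1$ to $k = 1$ (with the statement for $k=n$ in the lemma being understood via the same indexing convention, or treated as the base case $k=n-1$ covering all of $N\setminus\{1\}$ except the lowest-ranked). \textbf{Induction basis.} For the largest $k$, the only free agent among $N\setminus\{1\}$ is $\pi(n-1)$, who by definition receives the worst house among $O \setminus \{x_1\}$ at $R^*$; by \textsl{strategy-proofness} he must still receive $x_{\pi(n-1)}$ at $R^k$ (otherwise he would deviate to $R^*_{\pi(n-1)}$ at $R^*$), and then \textsl{non-bossiness} gives $f(R^k) = f(R^*)$, so every earlier $\pi(i)$ gets $x_{\pi(i)}$. \textbf{Induction step.} Fixing $R^K \in \mathcal{R}^{(K)}$, I would set $R^{K+1} \equiv (R^*_{\pi(K)}, R^K_{-\pi(K)})$, note $R^{K+1} \in \mathcal{R}^{(K+1)}$, apply the induction hypothesis to get $y_{\pi(i)} = x_{\pi(i)}$ for $i \le K$, and then run the exact three-claim argument of Lemma~\ref{lemma2}: (i) $f_{\pi(K)}(R^K) \notin Y := \{x_{\pi(1)}, \dots, x_{\pi(K-1)}\}$ by \textsl{strategy-proofness}; (ii) $f_{\pi(j)}(R^K) \notin Y$ for every $j > K$, via the same chain of monotonic transformations and an appeal to \textsl{locally constant tie-breaking} between $\pi(K)$ and the hypothetical offending agent $\pi(L)$, using that those two agents plus agent~$1$ and all of $\{\pi(1),\dots,\pi(K-1)\}$ end up sharing $R^*$'s common preference; (iii) for each $i < K$, $f_{\pi(i)}(R^K) = x_{\pi(i)}$, again by the Case~one / Case~two split that invokes \textsl{monotonicity} in the first case and \textsl{pairwise efficiency} (Lemma~\ref{lemma:pe}) in the second.

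The main obstacle — really the only nontrivial point beyond bookkeeping — is verifying that agent~$1$'s presence does not interfere with the auxiliary preference manipulations used in Claims~\ref{claim3}--\ref{claim5}. All of those manipulations only ever touch the preferences of agents in $N \setminus \{1\}$, and since agent~$1$ is the first dictator his allotment $x_1$ is untouched throughout; hence every house that the argument reasons about lies in $O \setminus \{x_1\}$ and every invocation of \textsl{monotonicity}, \textsl{non-bossiness}, \textsl{strategy-proofness}, \textsl{pairwise efficiency}, and \textsl{locally constant tie-breaking} goes through word for word with $O$ replaced by $O \setminus \{x_1\}$ in the relevant places. Because the argument is so closely parallel to that of Lemma~\ref{lemma2}, I would present the basis and step in compressed form and simply remark that Claims~\ref{claim3}--\ref{claim5} carry over mutatis mutandis. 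Finally, as in the proof of Theorem~\ref{thm0}, Lemma~\ref{lemma3} implies that for every $R \in \mathcal{R}^N$ agent $\pi(1)$ (the second dictator, whose identity depends only on $R_1$ and hence only on agent~$1$'s allotment $x_1$) receives his best house in $O \setminus \{x_1\}$; iterating the construction one dictator at a time and checking that the induced hierarchy function $\sigma$ satisfies (C1)--(C2) completes the proof that $f$ is a sequential dictatorship.
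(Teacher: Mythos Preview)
Your proposal is correct and takes essentially the same approach as the paper: the paper's own proof of Lemma~\ref{lemma3} is simply ``similar to the proof of Lemma~\ref{lemma2} and hence we omit it,'' and what you have written is precisely a careful unpacking of how the proof of Lemma~\ref{lemma2} transfers once agent~$1$ is frozen as the first dictator and the remaining agents are relabeled through $\pi$. Your identification of the only nontrivial checkpoint---that every auxiliary manipulation leaves $R_1$ (and hence $f_1 = x_1$) untouched, so the Claims~\ref{claim3}--\ref{claim5} argument goes through verbatim on $O\setminus\{x_1\}$---is exactly the observation that justifies the paper's omission.
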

The proof of Lemma~\ref{lemma3} is similar to the proof of Lemma~\ref{lemma2} and hence we omit it. Also, Lemma~\ref{lemma3} implies that given $R_1$, there always exists a second dictator $\pi(1)$.
Inductively, we can also show that given $R_1$ and $R_{\pi(1)}$, there always exists a third dictator and so on. Thus, the proof of Theorem~\ref{thm1} is completed. \qed

\subsection{Proof of Corollary~\ref{corollary2}}

Let $f$ be a mechanism satisfying the stated properties. Then $f$ is a \textsl{sequential dictatorship} by Theorem~\ref{thm1}. Let $i_1$ be the first dictator. We claim that there is a second dictator, say $i_2$, such that
\begin{equation}\label{2 is the second dictator}
\text{for all }R \in \mathcal{R}^N, \quad f_{i_{2}}(R)=\operatorname{top}_{R_{i_{2}}}(O\backslash\{f_{i_{1}}(R)\}).
\end{equation}
Let $R' \in \mathcal{R}^N$ be such that, for all $i \in N$, $R'_i = R'_{i_1}$, where $R'_{i_1}:a,b,\dots$. Then $f_{i_1}(R') = a$. Let $i_2 \in N \setminus \{i_1\}$ be such that $f_{i_2}(R') = b$. We claim that (\ref{2 is the second dictator}) holds for agent $i_2$.

Suppose otherwise. Because $f$ is a \textsl{sequential dictatorship}, there exists $R^*_{i_1} \in \mathcal{R}$ such that $\text{top}_{R_{i_{1}}^{*}}(O)\neq\text{top}_{R'_{i_{1}}}(O)=a$ and some agent $j \in N \setminus \{i_1, i_2\}$ is the second dictator when agent $i_1$ reports $R^*_{i_1}$, i.e.,
\begin{equation}\label{2 is not the local dictator}
    \text{for all }R_{-i_{1}}\in{\cal R}^{N\backslash\left\{ i_{1}\right\} },\quad f_{j}(R_{i_{1}}^{*},R_{-i_{1}})=\operatorname{top}_{R_{j}}(O\backslash\{f_{i_{1}}(R'_{i_{1}},R_{-i_{1}})\}).
\end{equation}

Let $c \coloneqq f_{i_{1}}(R_{i_{1}}^{*},R'_{-i_{1}})$. Then $f_{j}(R_{i_{1}}^{*},R'_{-i_{1}})=a$, which means that 
$$a \mathbin{P'_{i_{2}} b \mathbin{R'_{i_{2}}} f_{i_{2}}(R_{i_{1}}^{*},R'_{-i_{1}})}.$$
By the \textsl{identical preferences lower bound}, it holds that
$$f_{i_{2}}(R_{i_{1}}^{*},R'_{-i_{1}})\mathbin{R'_{i_{2}}}f_{i_{2}}(R')=b\mathbin{R'_{i_{2}}}f_{i_{2}}(R_{i_{1}}^{*},R'_{-i_{1}}).$$
Hence, $f_{i_{2}}(R_{i_{1}}^{*},R'_{-i_{1}}) = b$. Consequently, $f_j(R_{i_{1}}^{*},R'_{-i_{1}})=a$, $f_{i_2}(R_{i_{1}}^{*},R'_{-i_{1}})=b$, and $f_{i_1}(R_{i_{1}}^{*},R'_{-i_{1}}) = c$, which implies that
$$a \mathbin{P'_{i_2}} b \mathbin{P'_{i_2}}  c.  $$

Now let $R^\circ \in \mathcal{R}^N$ be such that, for all $i \in N$, $R^\circ_i = R^\circ_{i_1}$, where $R_{i_{1}}^{\circ}:c,a,b,\dots$. Because $f$ is a sequential dictatorship and (\ref{2 is not the local dictator}) holds, agent~$j$ is the second dictator whenever agent~$i_1$ top-ranks object $c$. Consequently, $f_j(R^\circ) = a$. Similarly, agent~$i_2$ is the second dictator whenever agent~$i_1$ top-ranks object $a$. Consequently,
$$f_{i_{1}}(R'_{i_{1}},R_{-i_{1}}^{\circ})=a \quad\text{and}\quad f_{i_{2}}(R'_{i_{1}},R_{-i_{1}}^{\circ})=c.$$
It follows that $b \mathbin{R_{j}^{\circ}}f_{j}(R'_{i_{1}},R_{-i_{1}}^{\circ})$. 
Consequently, $a=f_{j}(R^{\circ})\mathbin{P_{j}^{\circ}}f_{j}(R'_{i_{1}},R_{-i_{1}}^{\circ}) $, which violates the \textsl{identical preferences lower bound}. Therefore, (\ref{2 is the second dictator}) holds.

A similar argument shows that each of the subsequent dictators are the same at each preference profile. \qed

\section{Variable populations and \textsl{pairwise consistency}}
\label{appendix: variable populations}
In this appendix, we show that our justified fairness property, \textsl{globally constant tie-breaking}, is implied by ``pairwise consistency'' and ``pairwise neutrality,'' which are used to characterize serial dictatorships in \cite{ergin2000}. Thus, our Theorem~\ref{thm0} yields another characterization of serial dictatorships as a corollary.

We consider object allocation problems without monetary transfers, each of which is formed by a group of agents and a set of indivisible objects. Let $\mathcal{N}$ be a set of \textit{potential agents} and $\mathcal{O}$ a set of \textit{potential objects}. For each $O \subseteq \mathcal{O}$, $\mathcal{R}(O)$ denotes the set of all \textit{strict preference relations} on $O$.\footnote{That is, each $R_i \in \mathcal{R}(O)$ is a \textit{complete}, \textit{transitive}, and \textit{antisymmetric} binary relation on $O$.} An object allocation problem (or simply a \textit{problem}) is a triple $(N, O, R)$, where $\emptyset \neq N \subseteq \mathcal{N}$, $\emptyset \neq O \subseteq \mathcal{O}$, $|N| = |O|$, and $R = (R_i)_{i \in N} \in \mathcal{R}(O)^N$ is a \textit{profile} of strict preference relations.

Given a problem $\mathcal{E}=(N, O, R)$, an \textit{allocation (for $\mathcal{E}$)} is a bijection $x: N \to O$ that assigns to each agent $i\in N$ an object $x(i)\in O$. Note that for any two distinct agents $i,j\in N$, $x(i) \neq x(j)$. A \textit{mechanism} is a function $f$ that associates to each problem $(N, O, R)$ an allocation $f(N, O, R)$. For each $i\in N$, $f_i(N, O, R)$  denotes \textit{agent~$i$'s allotment}. The rest of our notation is the same as in the main text.

\subsection{Properties of mechanisms}

We next introduce and discuss some properties for allocations and mechanisms in this setting with variable populations.

\begin{definition}[\textbf{Strategy-proofness}]\ \\
	A mechanism $f$  satisfies \textit{strategy-proofness} if for each problem $(N, O, R)$, each agent $i\in N$, and each preference relation $R'_i\in \mathcal{R}(O)$, $f_i(N, O, (R_i,R_{-i})) \mathbin{R_i} f_i(N, O, (R'_i,R_{-i}))$.
\end{definition}

A \textit{priority} is a linear order $\pi$ on the set of potential agents $\mathcal{N}$. Agent~$i$ precedes agent~$j$ in this order if $i \mathbin{\pi} j$; in this case, we say that agent~$i$ has ``higher priority'' than agent~$j$.

\begin{definition}[\textbf{Globally constant tie-breaking}]\ \\ A mechanism $f$ satisfies \textsl{globally constant tie-breaking} if there exists a priority $\pi$ such that for each problem $(N, O, R)$ and any two distinct agents $i,j \in N$,
$$\text{if } i \mathbin{\pi} j\text{ and } R_i = R_j\text{, then }f_i(N,O,R) \mathbin{R_i} f_j(N,O,R).$$
\end{definition}


\begin{definition}[\textbf{Pairwise consistency}]\ \\ A mechanism $f$ is \textsl{pairwise consistent} if, for each problem $(N, O, R)$ such that $x \coloneqq f(N, O, R)$, and any two distinct agents $i,j \in N$,
$$f(\{i,j\}, \{x_i,x_j\}, (R_i\mid_{\{x_i, x_j\}}, R_j\mid_{\{x_i, x_j\}})) = x_{\{i,j\}}.$$ 
\end{definition}

\begin{definition}[\textbf{Pairwise neutrality}]\ \\ A mechanism $f$ is \textsl{pairwise neutral} if, for any two distinct agents $i,j \in \mathcal{N}$, and any two problems $(\{i,j\}, O, R)$ and $(\{i,j\}, \overline{O}, \overline{R})$, if $\sigma:O \to \overline{O}$ is a bijection such that
$$\text{for all } i' \in \{i, j\} \text{ and all } a,b \in O, \quad a \mathbin{R_{i'}} b \iff \sigma(a) \mathbin{\overline{R}_{i'}} \sigma(b),$$
then
$$f(\{i,j\}, \overline{O}, \overline{R}) = (\sigma(f_{i'}(N, O, R))_{i' \in \{i,j\}}.$$
\end{definition}

\subsection{Results}\label{sec:supplementary results}

\begin{proposition}
 If a mechanism is \textsl{pairwise consistent} and \textsl{pairwise neutral}, then it satisfies \textsl{global tie-breaking}.    
\end{proposition}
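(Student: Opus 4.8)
The plan is to read the priority $\pi$ off the behavior of $f$ on two-agent problems, using \textsl{pairwise neutrality} to make this reading well-posed, and then to use \textsl{pairwise consistency} both to show that the resulting relation is a linear order and to propagate the tie-breaking property to arbitrary problems.

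First I would define $\pi$. For any two distinct potential agents $i,j \in \mathcal{N}$, fix two distinct potential objects $a,b \in \mathcal{O}$ and let $R^{ab}$ be the two-agent profile on $\{i,j\}$ in which $R^{ab}_i = R^{ab}_j$ ranks $a$ above $b$; set $i \mathbin{\pi} j$ if and only if $f_i(\{i,j\},\{a,b\},R^{ab}) = a$, i.e.\ agent~$i$ receives the top-ranked object. Since $f(\{i,j\},\{a,b\},R^{ab})$ is an allocation, exactly one of $i \mathbin{\pi} j$ and $j \mathbin{\pi} i$ holds, so $\pi$ is total and antisymmetric. \textsl{Pairwise neutrality} makes the definition independent of the choice of $\{a,b\}$: given another pair $\{a',b'\}$, the bijection $\sigma$ with $\sigma(a)=a'$ and $\sigma(b)=b'$ satisfies the hypotheses of \textsl{pairwise neutrality} relating $(\{i,j\},\{a,b\},R^{ab})$ and $(\{i,j\},\{a',b'\},R^{a'b'})$, so $f_i$ receives $a$ in the first problem if and only if $f_i$ receives $a'$ in the second.

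Second, I would verify transitivity of $\pi$, which I expect to be the main obstacle, since this is where \textsl{pairwise consistency} is essential. Given distinct agents $i,j,k$, consider the three-agent problem $(\{i,j,k\},\{a,b,c\},R)$ in which all three agents share the common preference ranking $a$ above $b$ above $c$. Its outcome assigns $a$ to some agent $p$, $b$ to some agent $q$, and $c$ to some agent $r$, with $\{p,q,r\} = \{i,j,k\}$. Applying \textsl{pairwise consistency} to each of the pairs $\{p,q\}$, $\{p,r\}$, and $\{q,r\}$ and restricting the (identical) preferences to the two relevant objects yields two-agent problems of exactly the form used to define $\pi$; reading off the winners gives $p \mathbin{\pi} q$, $p \mathbin{\pi} r$, and $q \mathbin{\pi} r$. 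Hence the tournament that $\pi$ induces on $\{i,j,k\}$ is acyclic, which for three elements forces transitivity: if $i \mathbin{\pi} j$ and $j \mathbin{\pi} k$ then $i \mathbin{\pi} k$. Thus $\pi$ is a linear order on $\mathcal{N}$. (This argument uses the existence of at least three potential objects, which holds under the standing assumptions.)

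Finally, I would prove the tie-breaking property itself. Let $(N,O,R)$ be an arbitrary problem, $x \coloneqq f(N,O,R)$, and let $i,j \in N$ be distinct with $i \mathbin{\pi} j$ and $R_i = R_j$. If $x_i \mathbin{R_i} x_j$ there is nothing to prove, so suppose toward a contradiction that $x_j \mathbin{P_i} x_i$; since $R_i = R_j$, also $x_j \mathbin{P_j} x_i$. By \textsl{pairwise consistency}, $f(\{i,j\},\{x_i,x_j\},(R_i \mid_{\{x_i,x_j\}}, R_j \mid_{\{x_i,x_j\}})) = x_{\{i,j\}}$, so in this two-agent problem agent~$j$ receives $x_j$. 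But the two agents' restricted preferences coincide and rank $x_j$ first, so by the definition of $\pi$ the winner~$j$ satisfies $j \mathbin{\pi} i$, contradicting $i \mathbin{\pi} j$ and antisymmetry of $\pi$. Therefore $x_i \mathbin{R_i} x_j$, which is exactly what \textsl{globally constant tie-breaking} requires.
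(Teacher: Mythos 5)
Your proof is correct, but it takes a genuinely different route from the paper's. The paper imports a strong fact extracted from the proof of Theorem~1 in \cite{ergin2000} -- that for every pair of agents exactly one of four ``constant envy'' patterns $(\overline{i} \succeq j)$, $(\underline{i} \succeq j)$, $(\overline{j} \succeq i)$, $(\underline{j} \succeq i)$ holds across \emph{all} problems containing both agents -- and then defines $i \mathbin{\pi} j$ as ``$(\overline{i} \succeq j)$ or $(\underline{j} \succeq i)$,'' checking transitivity on a three-agent common-preference economy and reading off \textsl{globally constant tie-breaking} from the case distinction. You instead bypass Ergin's classification entirely: you define $\pi$ directly from two-agent identical-preference problems (with \textsl{pairwise neutrality} supplying well-definedness across object pairs and orderings), obtain transitivity by applying \textsl{pairwise consistency} to the three pairs inside a three-agent common-preference problem, and then propagate the ranking to arbitrary problems again via \textsl{pairwise consistency} (if the lower-priority preference-identical agent got the better object, restriction to the pair would contradict the definition of $\pi$). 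Your argument is self-contained and uses only the consequences of the two axioms on identical-preference subproblems, which is arguably the minimal input needed; the paper's argument is shorter given the cited result and delivers more along the way (constancy of envy patterns even between agents with different preferences), at the cost of leaning on the internals of Ergin's proof. Both arguments need at least three potential objects for the transitivity step, which you correctly flag.
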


\begin{proof}
Let $f$ be a mechanism  \textit{pairwise consistency} and \emph{pairwise neutrality}. The proof of Theorem~1 of \cite{ergin2000} shows that, for any two distinct agents $i, j \in \mathcal{N}$, exactly one of the following four cases holds:
\begin{enumerate}
    \item[$(\overline{i} \succeq j)$] In any problem $(N, O, R)$ with $\{i,j\} \subseteq N$, agent~$i$ does not envy agent~$j$ at $f(N, O, R)$, i.e., $f_i(N, O, R) \mathbin{R_i} f_j(N, O, R)$.
    \item[$(\underline{i} \succeq j)$] In any problem $(N, O, R)$ with $\{i,j\} \subseteq N$, agent~$i$ envies agent~$j$ at $f(N, O, R)$, i.e., $f_j(N, O, R) \mathbin{P_i} f_i(N, O, R)$.
    \item[$(\overline{j} \succeq i)$] In any problem $(N, O, R)$ with $\{i,j\} \subseteq N$, agent~$j$ does not envy agent~$i$ at $f(N, O, R)$, i.e., $f_j(N, O, R) \mathbin{R_j} f_i(N, O, R)$.
    \item[$(\underline{j} \succeq i)$] In any problem $(N, O, R)$ with $\{i,j\} \subseteq N$, agent~$j$ envies agent~$i$ at $f(N, O, R)$, i.e., $f_i(N, O, R) \mathbin{P_j} f_j(N, O, R)$.
\end{enumerate}

We construct a linear order $\pi$ on $\mathcal{N}$ such that $f$ satisfies \textsl{globally constant tie-breaking} with respect to $\pi$. Note that our linear order $\pi$ differs from \cite{ergin2000}'s construction. Let $\pi$ be a reflexive binary relation on $\mathcal{N}$ such that, for any two distinct agents $i,j \in \mathcal{N}$,
$$i \mathbin{\pi} j \iff (\overline{i} \succeq j\text{ or }\underline{j} \succeq i).$$
We first verify that $\pi$ is a linear order. Note that $\pi$ is \textit{complete} and \textit{antisymmetric} because it is \textit{reflexive} and, for any two distinct agents $i, j \in \mathcal{N}$, \emph{exactly} one of the four cases $(\overline{i} \succeq j)$, $(\underline{i} \succeq j)$, $(\overline{j} \succeq i)$, or $(\underline{j} \succeq i)$ prevails. It remains to show that $\pi$ is \textit{transitive}.

To this end, let $i,j,k \in \mathcal{N}$ be potential agents such that $i \mathbin{\pi} j$ and $j \mathbin{\pi} k$. We may assume that all three agents are distinct, for otherwise $i \mathbin{\pi} k$ by \textit{reflexivity}. Because $i \mathbin{\pi} j$ and $j \mathbin{\pi} k$, there are four possibilities: (i) $(\overline{i} \succeq j)$ and $(\overline{j} \succeq k)$, (ii) $(\overline{i} \succeq j)$ and $(\underline{k} \succeq j)$, (iii) $(\underline{j} \succeq i)$ and $(\overline{j} \succeq k)$, and (iv) $(\underline{j} \succeq i)$ and $(\underline{k} \succeq j)$.

Consider the economy $\mathcal{E} \coloneqq (N, O, R)$ with $N = \{i,j,k\}$, $O = \{a,b,c\}$, and the following preferences:
$$R_1:a,b,c; \quad R_2:a,b,c; \quad \text{and} \quad R_3:a,b,c.$$
In each of the four cases (i), (ii), (iii), and (iv), either $i$ does not envy $j$ at $f(\mathcal{E})$, or $j$ envies $i$ at $f(\mathcal{E})$. Because $i$ and $j$ have identical preferences at $\mathcal{E}$, we must have $f_i(\mathcal{E}) \mathbin{R_i} f_j(\mathcal{E})$. Similarly, in each of the four cases (i), (ii), (iii), and (iv), we must have $f_j(\mathcal{E}) \mathbin{R_j} f_k(\mathcal{E})$. Thus, we must have $f_i(\mathcal{E}) = a$, $f_j(\mathcal{E}) = b$, and $f_k(\mathcal{E}) = c$. Consequently, $\mathcal{E}$ is a problem such that $i$ does not envy $k$ at $f(\mathcal{E})$, and $k$ envies $i$ at $f(\mathcal{E})$. Therefore, it is not the case that $(\underline{i} \succeq k)$ or $(\overline{k} \succeq i)$ holds. It follows that $(\overline{i} \succeq k)$ or $(\underline{k} \succeq i)$. That is, $i \mathbin{\pi} k$, so $\pi$ is \textit{transitive}.

We now show that $f$ satisfies \textsl{globally constant tie-breaking} with respect to the linear order $\pi$. Let $\mathcal{E} \coloneqq (N, O, R)$ be a problem such that there exist two distinct agents $i,j \in N$ with $i \mathbin{\pi} j$ and $R_i = R_j$. Then $i \mathbin{\pi} j$ means that either $(\overline{i} \succeq j)$ or $(\underline{j} \succeq i)$ holds. If $(\overline{i} \succeq j)$ holds, then $f_i(\mathcal{E}) \mathbin{R_i} f_j(\mathcal{E})$. If $(\underline{j} \succeq i)$ holds, then $f_i(\mathcal{E}) \mathbin{P_j} f_j(\mathcal{E})$; thus, $R_i = R_j$ implies that $f_i(\mathcal{E}) \mathbin{P_i} f_j(\mathcal{E})$. Consequently, $f$ satisfies \textsl{globally constant tie-breaking}. 
\end{proof}

Theorem~\ref{thm0} remains true in this setting. We therefore obtain the following corollary.

\begin{corollary}
    A mechanism satisfies \textsl{strategy-proofness}, \textsl{pairwise consistency}, and \textsl{pairwise neutrality} if and only if it is a serial dictatorship.
\end{corollary}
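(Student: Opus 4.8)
The plan is to obtain the Corollary as a short consequence of the Proposition together with Theorem~\ref{thm0}, so the work is mostly bookkeeping rather than new arguments.

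First I would settle the easy direction. If $f$ is the serial dictatorship associated with a priority $\pi$ on $\mathcal{N}$, then $f$ is \textsl{strategy-proof} by the standard argument: at his turn each agent faces a menu of remaining objects that is independent of his own report, so reporting truthfully and taking his favorite available object is optimal. For \textsl{pairwise consistency}, fix a problem $(N,O,R)$ with $x \coloneqq f(N,O,R)$ and two agents $i,j \in N$ with $i \mathbin{\pi} j$; since $x_j$ is still available at agent~$i$'s (strictly earlier) turn, $x_i \mathbin{P_i} x_j$, so in the two-agent problem $(\{i,j\},\{x_i,x_j\},(R_i\mid_{\{x_i,x_j\}},R_j\mid_{\{x_i,x_j\}}))$ the higher-priority agent~$i$ selects $x_i$ and agent~$j$ receives $x_j$, which is exactly $x_{\{i,j\}}$. \textsl{Pairwise neutrality} is immediate from the definition of a serial dictatorship, since a preference-preserving relabeling of the objects relabels the greedy choices in the same way.

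For the converse, suppose $f$ satisfies all three properties. By the Proposition, \textsl{pairwise consistency} and \textsl{pairwise neutrality} imply that $f$ satisfies \textsl{globally constant tie-breaking} with respect to some priority $\pi$ on $\mathcal{N}$. Now fix any admissible pair $(N,O)$ with $|N|=|O|$ and restrict attention to problems with this agent set and object set; the restriction of $f$ is then a mechanism in the fixed-population sense of Section~\ref{sec:model}, it inherits \textsl{strategy-proofness}, and it satisfies \textsl{globally constant tie-breaking} with respect to $\pi$ restricted to $N$. Because the proof of Theorem~\ref{thm0} is carried out entirely within one fixed agent set and object set and uses only these two properties, it applies here and yields that the restriction of $f$ to $(N,O)$ is a serial dictatorship; moreover, evaluating $f$ at the profile where all agents in $N$ report a common preference and invoking \textsl{globally constant tie-breaking} forces agent~$i$ to receive his rank-th best object there (the rank given by $\pi$ restricted to $N$), so the priority recovered by that proof is precisely the restriction of $\pi$ to $N$. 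Hence $f$ coincides on $(N,O)$ with the serial dictatorship driven by $\pi$, and since $(N,O)$ was arbitrary, $f$ is the serial dictatorship associated with the single global priority $\pi$.

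The hard part is purely organizational: one must check that Theorem~\ref{thm0}'s argument transfers verbatim to each slice $(N,O)$ and, crucially, that the slice-wise priorities are mutually compatible. The Proposition is what makes the latter work, since it hands over one priority $\pi$ on all of $\mathcal{N}$ and guarantees that each identical-preference profile is resolved in $\pi$-order, which pins the slice-wise dictator down to the restriction of $\pi$ rather than to some unrelated permutation of $N$; once this is noted, no separate gluing argument (e.g.\ via \textsl{pairwise consistency} across different population sizes) is required.
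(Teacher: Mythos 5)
Your proof is correct and follows essentially the same route as the paper: the Proposition converts \textsl{pairwise consistency} and \textsl{pairwise neutrality} into \textsl{globally constant tie-breaking} with a single priority $\pi$ on $\mathcal{N}$, and then Theorem~\ref{thm0} (whose argument is carried out within a fixed agent/object set) is applied problem by problem, with the common $\pi$ ensuring the slice-wise dictatorship orders agree. Your extra bookkeeping about restricting to each $(N,O)$ and recovering $\pi$ from the identical-preference profiles just makes explicit what the paper summarizes as ``Theorem~\ref{thm0} remains true in this setting.''
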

\end{document}